\documentclass[format=acmsmall, review=false, screen=true, manuscript=true]{acmart}
\pdfoutput=1

\usepackage{booktabs} 
\usepackage{multirow}
\usepackage{todonotes}
\usepackage{cleveref}
\usepackage{tikz}
\usepackage{subcaption}
\usepackage[ruled]{algorithm2e} 
\usepackage{enumitem}
\usepackage{color}
\usepackage{fancyvrb}
\usepackage[most]{tcolorbox}
\usepackage[mathscr]{euscript}
\usepackage{amsmath}
\usepackage{amssymb}

\SetAlFnt{\small}
\SetAlCapFnt{\small}
\SetAlCapNameFnt{\small}
\SetAlCapHSkip{0pt}
\IncMargin{-\parindent}

\acmJournal{TOMS}

\setcopyright{acmlicensed}

\acmDOI{0000001.0000001}

\citestyle{acmauthoryear}

\definecolor{codebgcolor}{gray}{0.97}
\definecolor{indexemphcolor}{HTML}{0095e4}

\newtheorem{prop}{Proposition}
\Crefname{prop}{Proposition}{Propositions}

\newtheorem{lemma}{Lemma}
\Crefname{lemma}{Lemma}{Lemmas}

\newtheorem{ex}{Example}
\Crefname{ex}{Example}{Examples}

\newcommand{\yateto}{YATeTo}

\newcommand{\tp}[1]{{\pi\!#1}}
\newcommand{\dd}[1]{\,\mathrm{d}#1}

\DeclareMathOperator{\size}{size}
\DeclareMathOperator{\cost}{W}
\DeclareMathOperator{\nodecost}{w}
\DeclareMathOperator{\vertices}{\mathcal{V}}
\DeclareMathOperator{\children}{\mathcal{C}}
\DeclareMathOperator{\descendants}{\mathcal{D}}
\DeclareMathOperator{\granddescendants}{\mathcal{G}}

\DeclareMathOperator{\R}{\mathbb{R}}

\DeclareFontFamily{U}{mathx}{\hyphenchar\font45}
\DeclareFontShape{U}{mathx}{m}{n}{<-> mathx10}{}
\DeclareSymbolFont{mathx}{U}{mathx}{m}{n}
\DeclareMathAccent{\widebar}{0}{mathx}{"73}

\newcommand{\ms}{\mathbf{\textcolor{indexemphcolor}{s}}}

\input{code/style}

\newtcolorbox{codebox}{
  colback=codebgcolor,
  boxrule=-1pt,
  left=0.2mm,
  right=0.2mm,
  top=0.2mm,
  bottom=0.2mm
}

\RecustomVerbatimEnvironment{Verbatim}{Verbatim}{fontsize=\small}

\interfootnotelinepenalty=10000

\begin{document}
\title{Yet Another Tensor Toolbox for discontinuous Galerkin methods and other applications}

\author{Carsten Uphoff}
\affiliation[obeypunctuation=true]{%
  \institution{Technical University of Munich}
  \streetaddress{Boltzmannstr. 3}
  \postcode{85748}
  \city{Garching},
  \country{Germany}
}
\email{uphoff@in.tum.de}
\author{Michael Bader}
\affiliation[obeypunctuation=true]{%
  \institution{Technical University of Munich}
  \streetaddress{Boltzmannstr. 3}
  \postcode{85748}
  \city{Garching},
  \country{Germany}
}
\email{bader@in.tum.de}

\begin{abstract}
The numerical solution of partial differential equations is at the heart of many
grand challenges in supercomputing.
Solvers based on high-order discontinuous Galerkin (DG) discretisation
have been shown to scale on large supercomputers with excellent performance
and efficiency, if the implementation exploits all levels of parallelism
and is tailored to the specific architecture.
However, every year new supercomputers emerge and the list of hardware-specific
considerations grows, simultaneously with the list of desired features in a DG code.
Thus we believe that a sustainable DG code needs an abstraction layer to implement
the numerical scheme in a suitable language.
We explore the possibility to abstract the numerical scheme as small tensor operations,
describe them in a domain-specific language (DSL) resembling the Einstein notation, and to map them to existing code generators which generate small matrix matrix multiplication routines.
The compiler for our DSL implements classic optimisations that are used
for large tensor contractions, and we present novel optimisation techniques such as
equivalent sparsity patterns and optimal index permutations for temporary tensors.
Our application examples, which include the earthquake simulation software SeisSol,
show that the generated kernels achieve over 50\,\% peak performance while the
DSL considerably simplifies the implementation.
\end{abstract}

%
%
 \begin{CCSXML}
<ccs2012>
<concept>
<concept_id>10010147.10010341.10010349.10010362</concept_id>
<concept_desc>Computing methodologies~Massively parallel and high-performance simulations</concept_desc>
<concept_significance>500</concept_significance>
</concept>
<concept>
<concept_id>10011007.10011006.10011041.10011047</concept_id>
<concept_desc>Software and its engineering~Source code generation</concept_desc>
<concept_significance>500</concept_significance>
</concept>
<concept>
<concept_id>10011007.10011006.10011050.10011017</concept_id>
<concept_desc>Software and its engineering~Domain specific languages</concept_desc>
<concept_significance>500</concept_significance>
</concept>
<concept>
<concept_id>10010405.10010432.10010437</concept_id>
<concept_desc>Applied computing~Earth and atmospheric sciences</concept_desc>
<concept_significance>500</concept_significance>
</concept>
</ccs2012>
\end{CCSXML}

\ccsdesc[500]{Computing methodologies~Massively parallel and high-performance simulations}
\ccsdesc[500]{Software and its engineering~Source code generation}
\ccsdesc[500]{Software and its engineering~Domain specific languages}
\ccsdesc[500]{Applied computing~Earth and atmospheric sciences}

%
%

\maketitle

\section{Introduction}

Solving partial differential equations (PDEs) is one of the pillars of computational science and engineering,
and solving PDEs accurately on a computer is one of the grand challenges in high performance computing.
Simulations may have billions of degrees of freedom, hence highly scalable codes that make efficient
use of the invested energy are required.
But, highly efficient software often requires expert knowledge, the resulting code might not
reflect the underlying numerical scheme, and code tends to become complex:
While on the one hand a software might want to support several PDEs, several finite element types,
or multi-physics simulations, on the other hand developers have to deal with the subtleties of
modern hardware architectures, for example vector instruction latency or level 1 cache bandwidth.
Moreover, evolution of hardware architectures may require the adaption of all implemented combinations
or even a radical change of data structures.

A strategy to deal with the mentioned complexity is to choose a proper abstraction.
For example in linear algebra, a lot of operations may be implemented with only a
few simple inner-most kernels (e.g.\ \cite{VanZee2015}).
The advantage is two-fold, as firstly only a few kernels need to be optimised by an expert,
and secondly all projects that build up on these kernels may benefit from improvements.
Abstracting algorithms in terms of linear algebra operations, especially the
General Matrix-Matrix Multiplication (GEMM), is becoming popular for the numerical
solution of PDEs, too.
High-order discontinuous~Galerkin~(DG) or spectral element methods
usually consist of small element local operators, such that its implementations may be
expressed as sequence of small GEMMs \cite{Vincent2016, Uphoff2017, Hutchinson2016, Breuer2017}.
The matrices are usually small enough, such that data movement considerations within a GEMM become irrelevant.
At the same time, the overhead of calling BLAS and the choice of GEMM micro-kernels
becomes significant, such that specialised code generators for small GEMMs are
employed \cite{Heinecke2016b}, or even specialised code generators for small tensor
contractions are developed \cite{Breuer2017}.
An alternative ansatz is to express tensor operations in an appropriate intermediate
representation, which captures the underlying loop structure, and use compiler techniques
and cost models in order to generate efficient code \cite{Stock2011,Luporini2015,Kempf2018}.

The abstraction of tensor operations is not new but has been researched for decades
in the field of computational chemistry.
Here, one may distinguish two different levels of abstraction:
The first level of abstraction is the move from an expression involving several
tensors to binary tensor operations.
On this level of abstraction, algebraic transformations may be employed.
In the so-called strength reduction, for example, the optimal sequence
of tensor operations is determined (under a memory constraint) such that
the total number of floating point operations is minimised,
which is then usually magnitudes smaller than a naive implementation \cite{Lam1997}.
The most prominent representative is the Tensor Contraction Engine (TCE) \cite{Baumgartner2005},
but also a framework called Barracuda for GPUs exists \cite{Nelson2015}.
The second level of abstraction is the implementation of binary tensor operations,
especially for tensor contractions.
For the latter, a variety of algorithms exist, such as nested-loop code, Loop-over-GEMMs (LoG),
Transpose-Transpose-GEMM-Transpose (TTGT), and GEMM-like Tensor-Tensor
multiplication (GETT) \cite[and references therein]{Springer2018}.

In computational chemistry tensors may become very large, such that intermediate
results may not remain in memory but must be loaded from disk \cite{Baumgartner2005}.
So despite the fact that quantum chemistry models and high-order DG methods
may be abstracted as tensor operations, the situation is quite different for the latter,
as shall be outlined in the following:
\begin{itemize}[topsep=0pt]
 \item All matrices fit inside memory or even inside low-level caches.
       There is no need to optimise for memory usage or fuse loops to
       trade off memory usage and operation count \cite{Baumgartner2005}.
 \item Tensor operations do not need to be distributed among nodes.
 \item Data copies or index permutations may be amortised for large
       GEMMs \cite{Goto2008,Springer2018}, as these scale with $\mathcal{O}(N^2)$ compared
       to $\mathcal{O}(N^3)$, but this is usually not the case
       for small GEMMs \cite{Heinecke2016b}.
       (Which excludes TTGT.)
 \item All tensor dimensions and respective sparsity patterns in element-local
       operations are known a priori.
       Furthermore, spending a lot of time optimising individual tensor operations can
       be justified, as those are called millions of times during simulations \cite{Uphoff2017}.
 \item Sparsity can be dealt with explicitly, as it is feasible
       to compute the sparsity pattern of a tensor operation during
       the invocation of a code generator.
       There is no need to estimate the sparsity, as proposed in \cite{Lam1999}.
 \item Some optimisations are special to small GEMMs or DG methods, such
       as padding matrices for optimal vectorisation or tailored software prefetching
       schemes \cite{Heinecke2016a}.
\end{itemize}

Other publications and open-source packages focus on binary tensor contractions \cite{Solomonik2013,Springer2018,Shi2016,Li2015,Matthews2018},
GPUs \cite{Nelson2015}, or focus on loop transformations \cite{Stock2011,Luporini2015,Kempf2018},
where the latter lack support for sparse matrices in element-local operators and are to our
understanding not designed for use with code generators for small GEMMs.

We therefore present \textbf{Y}et \textbf{A}nother \textbf{Te}nsor \textbf{To}olbox (YATeTo),
which we explicitly designed for the use in PDE solvers.
In YATeTo, tensor operations are described in a high-level language
that is motivated by the Einstein convention.
An abstract syntax tree (AST) is formed on which algebraic transformations and other
optimisations are done.
Afterwards, code generators for small GEMMs are called and finally C++11-code is
generated which calls either generated code or calls BLAS libraries.
Our design does not prescribe the use of a specific implementation
of binary tensor contractions, but allows to choose the implementation based
on operation type and hardware architecture.
Note that our main focus are high-order DG and spectral element methods,
and that our ultimate goal is that the performance of the \yateto-generated code matches
the performance of hand-crafted code, but the toolbox itself may be used
for arbitrary applications where small tensor operations are required
and tensor dimensions and respective sparsity patterns are known a priori.

\section{Related work}
Many components of YATeTo build on previous algorithms and ideas.
In this section we summarise related work and necessary background.

\subsection{High-level language and representation}\label{sec:hll}
The need for high-level languages for tensor contractions was already
clear to A.~Einstein, who wrote in the foundations of general relativity:
\begin{quote}
 ``Daf\"ur f\"uhren wir die Vorschrift ein: Tritt ein Index in einem Term
 eines Ausdrucks zweimal auf, so ist \"uber ihn stets zu summieren,
 wenn nicht ausdr\"ucklich das Gegenteil bemerkt ist.'' \cite{Einstein1916}
\end{quote}
The so-called Einstein convention states that one shall sum over every index
that appears twice in a term.
Elegant domain-specific languages may be defined using this convention.
For example, in the Cyclops tensor framework one may write \cite{Solomonik2013}
\begin{codebox}
  \input{code/cyclops.pyg}
\end{codebox}
which is valid C++ (using operator overloading) and implements
\begin{displaymath}
z_{abij} = \sum_{mn} t_{abmn} \left(\sum_{ef}v_{mnef} t_{efij}\right).
\end{displaymath}
The Barracuda framework defines a similar high-level language \cite{Nelson2015}.

In the TCE, tensor operations are represented internally as function sequences
consisting of two types of formulae \cite{Lam1997}:
The first type is a multiplication formula of the form $f_r[\dots] = X[\dots] \times Y[\dots]$.
Depending on the indices inside the square brackets a different operation is represented.
E.g.\ $f_r[i,j] = X[i] \times Y[j]$ represents a tensor product whereas
$f_r[i] = X[i] \times Y[i]$ represents a Hadamard product.
(Note that two indices appear but are not summed over, that is,
this representation does not adhere to the Einstein convention.)
The second type is a summation formula of the form $f_r[\dots] = \sum_i X[\dots]$.
Here, the dimension of the left-hand side is always one less than the right-hand side,
and the dimension to be summed over is determined by the indices in the brackets.

\subsection{Strength reduction}

In many cases there are many mathematically equivalent ways to implement tensor
operations.
In~\cite{Baumgartner2005} the following motivating example is given,
where each dimension has size~$N$,
\begin{displaymath}
  S_{abij} = \sum_{cdefkl} A_{acik}B_{befl}C_{dfjk}D_{cdel} = 
	     \sum_{ck}\left(\sum_{df}\left(\sum_{el}B_{befl}D_{cdel}\right)C_{dfjk}\right)A_{acik}
\end{displaymath}
The first implementation corresponds to the naive implementation with 10 nested
loops, which requires $4N^{10}$ operations.
The second implementation saves intermediate results and only requires
$6N^6$ operations.
Hence, the second implementation is four orders of magnitudes cheaper than
the first implementation.

Naturally, one asks if and how optimal implementations can be found automatically.
A formal optimisation problem is set up in \cite{Lam1997} using the two types
of formulae mentioned in \Cref{sec:hll}.
It is shown that this optimisation problem is NP-complete
and an efficient search procedure is developed in \cite{Lam1997},
which exhaustively searches all valid formulae and hence finds the optimal
implementation.
As the number of dimensions arising in practice is typically small,
this search procedure is feasible.
The number of dimensions is also small enough in our application examples, see
\Cref{sec:applications}, hence there is no need to develop new
algorithms for strength reduction.

\subsection{Loop-over-GEMM (LoG)}\label{sec:logintro}
The authors of \cite{DiNapoli2014} give a summary of tensor contraction classes
and possible mappings to BLAS.
In a binary contraction, the contraction is classified in the number of free indices
of each input tensor. (That is, those indices which are not contracted.)
The most interesting case is when both tensors have at least one free index,
as in this case a mapping to GEMM is always possible,
if data is allowed to be copied or non-stride-one-GEMMs are considered.

Consider the following contraction from \cite{Springer2018}:
\begin{displaymath}
  C_{abcijk} := \sum_{m} A_{ijmc}B_{mkab}.
\end{displaymath}
Following \cite{Shi2016}, we have the following options in a LoG implementation:
We may batch modes yielding a non-free index $i$, denoted with $[i]$, we may flatten modes
yielding a single free index, denoted with $(ij)$, and we may transpose modes,
denoted with $A_{ij}^T$.
Note that transposes may only be applied to exactly 2 free modes.

A possible LoG implementation would be
\begin{equation}\label{eq:example_log_springer}
 C_{(ab)[c](ij)[k]} = \sum_{m} B_{m[k](ab)}^T A^T_{(ij)m[c]}.
\end{equation}
That is, we loop over the indices c and k.
Inside the loop, GEMM is called for subtensors of C, B, and A.
An implementation of \Cref{eq:example_log_springer} is listed in \cite[Listing~8]{Springer2018}.

\section{Tensors in discontinuous Galerkin methods}\label{sec:intro_dg}
The discontinuous Galerkin method has become a popular method
for solving hyperbolic PDEs.
The DG method has several advantages compared to other classic methods,
such as Finite Difference methods, Finite Volume methods, and continuous Finite Element methods:
It simultaneously allows to handle complex geometries, achieves
high-order convergence rates, and has an explicit semi-discrete form \cite{Hesthaven2008}.
Additionally, the high number of element-local operations and the small stencil
make it a promising candidate to exploit the capabilities of modern supercomputing hardware.
Before discussing the details of our tensor toolbox,
we want to motivate why tensor contractions are a natural abstraction for DG methods.

As an example, we consider the following linear system of PDEs:
\begin{equation}\label{eq:linearpde}
 \dfrac{\partial q_p}{\partial t} +
 A_{pq} \dfrac{\partial q_q}{\partial x} +
 B_{pq} \dfrac{\partial q_q}{\partial y} +
 C_{pq} \dfrac{\partial q_q}{\partial z} = 0.
\end{equation}
Multiplying the latter equation with a test function $\phi_k$, integrating over a domain $\Omega$,
and integrating by parts yields the corresponding weak form
\begin{multline}\label{eq:weakform}
 \int_{\Omega}\phi_k\dfrac{\partial q_p}{\partial t}\dd{V} +
 \int_{\partial \Omega}\phi_k\left(n_xA_{pq} + n_yB_{pq} +
    n_yC_{pq}\right)q_q\dd{S} \\
 + \int_{\Omega}\left(\dfrac{\partial \phi_k}{\partial x}A_{pq} q_q +
 \dfrac{\partial \phi_k}{\partial y}B_{pq} q_q +
 \dfrac{\partial \phi_k}{\partial z}C_{pq} q_q\right)\dd{V} = 0,
\end{multline}
where $(n_x, n_y, n_z)$ is the outward-pointing normal vector.
From here a sequence of steps follows to obtain a semi-discrete form, such as
partitioning the domain $\Omega$ in finite elements $\Omega_i$ (e.g.\ tetrahedra or hexahedra),
introducing a numerical flux to weakly couple the finite elements,
transforming the physical elements to a reference element,
and introducing a polynomial approximation of the quantities on the reference element \cite{Atkins1998,Dumbser2006,Hesthaven2008}.
After these steps, the semi-discrete form is a system of ODEs, which
may, for example, be solved with a Runge-Kutta scheme or the ADER approach \cite{Dumbser2006}.

We skip some steps for simplicity, and assume that the quantities are discretised as
$q_p~=~Q_{lp}~\phi_l(x,y,z)$ at a given point in time on an element $\Omega_i$.
Here, $\phi_l(x,y,z)$ are a set of $\mathcal{B}$ polynomial basis functions and $Q$ is a $\mathcal{B}\times P$
matrix, storing the coefficients of the basis expansion for every quantity, where
$P$ is the number of quantities.
By inserting our discretisation we may write the second line of \eqref{eq:weakform}
in the following way
\begin{equation}\label{eq:stiffnesskernel}
 Q_{lq} A_{pq}\int_{\Omega_i}\dfrac{\partial \phi_k}{\partial x}\phi_l\dd{V} +
 Q_{lq} B_{pq}\int_{\Omega_i}\dfrac{\partial \phi_k}{\partial y}\phi_l\dd{V} +
 Q_{lq} C_{pq}\int_{\Omega_i}\dfrac{\partial \phi_k}{\partial z}\phi_l\dd{V}.
\end{equation}
The integrals in \eqref{eq:stiffnesskernel} may be pre-computed and stored as (so-called) stiffness matrices
$K^{x}$, $K^{y}$, and $K^{z}$ of size $\mathcal{B}\times \mathcal{B}$.\footnote{Note that these matrices need to be only computed for the reference element
in an actual implementation.}
The implementation in terms of GEMM is then given by the sequence of matrix chain products
$K^xQA^T+K^yQB^T+K^zQC^T$.

When one uses the unit cube as reference element, then it is possible to use a spectral basis,
where the basis functions and test function are given by
$\phi_{k(x,y,z)} = \psi_x(x)\psi_y(y)\psi_z(z)$.
The degrees of freedom are then stored in a 4-dimensional
tensor $Q$, such that $q_p = Q_{lmnp} \psi_l(x) \psi_m(y) \psi_n(z)$.
In each dimension we have $N+1$ basis functions, where $N$ is the polynomial degree, and $Q$ has $(N+1)^3 P$ entries.
Inserting the discretisation and the test functions in the second line of \eqref{eq:weakform}, we obtain
\begin{displaymath}
 Q_{lmnq} A_{pq}K_{xl}M_{ym}M_{zn} +
 Q_{lmnq} A_{pq}M_{xl}K_{ym}M_{zn} +
 Q_{lmnq} A_{pq}M_{xl}M_{ym}K_{zn},
\end{displaymath}
where $K_{ij} = \int_{0}^1\frac{\partial \psi_i}{\partial x}\psi_j\dd{x}$
and $M_{ij} = \int_{0}^1\psi_i\psi_j\dd{x}$.
We note here that the index order \textit{lmnp} is chosen
arbitrarily.
In fact, the index permutations \textit{plmn}, \textit{lpmn},
and \textit{lmpn} are equally viable, and it depends on the 
application which is best.
E.g.\ if $P$ is large and $N+1$ is smaller than the SIMD vector
width then it might be beneficial to store index
$p$ continuous in memory.
Otherwise, if $N+1$ is large and $P$ is small then
either of \textit{lmn} should be continuous in memory.

As a final example, one may solve multiple problems concurrently.
E.g.\ one may add another dimension to the degrees of freedom $Q_{lp}$ in
\eqref{eq:stiffnesskernel} to obtain a 3D tensor $Q_{slp}$, which stores
multiple degrees of freedom \cite{Breuer2017}.
The matrix chain product $K^xQA^T$ then becomes the tensor contraction sequence
$K^x_{kl}Q_{slq}A_{pq}$ and may be implemented using the Loop-over-GEMM approach,
e.g.\ $(Q_{(sl)q}A_{pq}^T)_{sl[p]}(K^x_{kl})^T$.

\section{Yet Another Tensor Toolbox}

\begin{figure}
 \includegraphics[width=.95\textwidth]{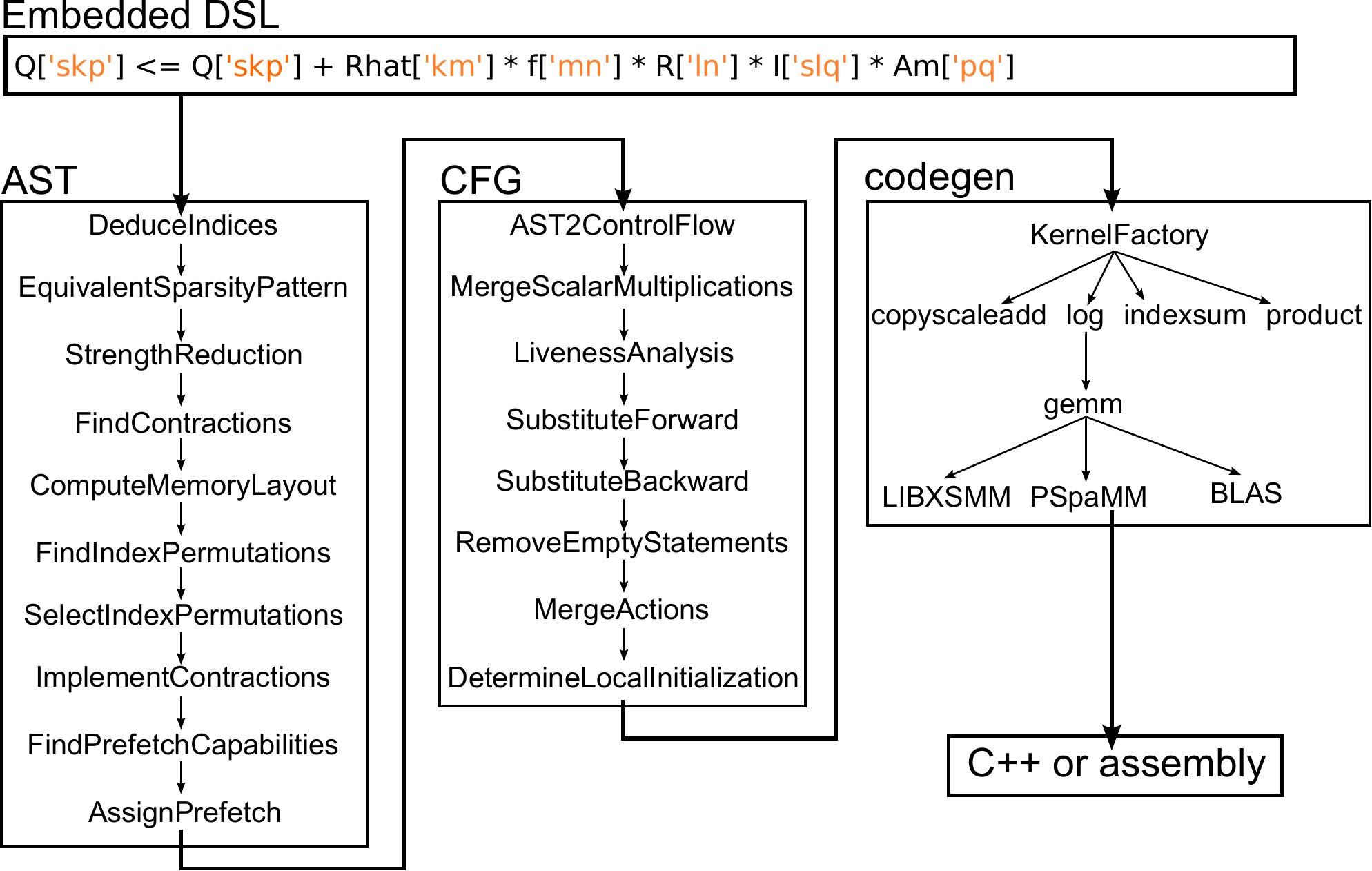}
 \caption{Overview of \yateto, showing the path from the high-level language
	  to the generated code. The first stage (left box) operates on an
	  abstract syntax tree, in the second stage (middle box) the representation
	  is changed to a simple control flow graph without branches, and
	  in the last stage (right box) different back-ends are called
	  which output C++ or assembly.}
  \label{fig:overview}
\end{figure}

The input to our tensor toolbox is a domain-specific language (DSL),
which naturally resembles the Einstein notation.
An abstract syntax tree (AST) is derived from an expression.
Subsequently, the tree is shaped using a sequence of visitors \cite{Gamma1995}.
Afterwards, we transform the AST to a control flow graph (CFG) for
standard compiler techniques \cite{Seidl2012}.
Note that the CFG is very simple as the DSL does not feature loops and branches.
Finally, a code generator is called which generates either C++-11 code
or may make use of existing code generators or BLAS libraries.
An overview is shown in \Cref{fig:overview}.
\yateto~is open-source and available on \url{www.github.com/SeisSol/yateto}.

In the following, we are going to introduce our DSL and elaborate algorithms and
design choices of each of the three stages.

\subsection{High-level description of tensor operation}

The DSL we created is embedded into Python 3.
An embedded DSL has the advantage that we do not need our own parser and lexer,
which decreases the complexity of the software.
Furthermore, a user may use any feature of Python to form expressions,
for example including loops, branches, classes, and lambda expressions.

The basic data types are the classes Tensor and Scalar.
A tensor definition includes a name and a shape tuple, e.g.\ as in the following:
\begin{codebox}
  \input{code/basic_tensor.pyg}
\end{codebox}
A and C are matrices, B a third order tensor, and w is a vector, where all
dimensions are of size N.

From a Tensor, an object of type IndexedTensor may be derived by supplying an index string.
The convention is that the length of the index string must be equal to the number
of dimensions of the tensor and that only lower- and upper-case letters may be used.\footnote{Note
that this limits the number of distinct indices that may appear in an expression to 52.
We think that this is not an issue in practice, hence we favour simplicity
instead of generality.}
Objects of type IndexedTensor may appear in expressions, e.g.\

\begin{codebox}
  \input{code/kernel_example.pyg}
\end{codebox}

The design is inspired by \cite{Solomonik2013} (see \Cref{sec:hll}),
and represents the operation
\begin{displaymath}
C_{ij} := 2\cdot C_{ij} + \sum_{l=0}^{N-1}\sum_{k=0}^{N-1} A_{lj} B_{ikl} w_k
\quad\text{or equivalently}\quad C := 2\cdot C + (B\times_2 w) A.
\end{displaymath}
The variable kernel contains an AST with nodes
Assign, Add, ScalarMultiplication, and Einsum, where the latter
represents a tensor operation using the Einstein convention.

The AST is built by overloading the binary operators '*' and '+'.
One of the following four actions is taken whenever one of the operators is
invoked for operands op1 and op2, where nodeType = Einsum for '*' and
nodeType = Add for '+':
\begin{enumerate}
 \item type(op1) = nodeType and type(op2) = nodeType: Append op2's children to op1.
 \item type(op1) = nodeType and type(op2) $\neq$ nodeType: Append op2 to op1.
 \item type(op1) $\neq$ nodeType and type(op2) = nodeType: Prepend op1 in op2.
 \item type(op1) $\neq$ nodeType and type(op2) $\neq$ nodeType: Return nodeType of op1 and op2.
\end{enumerate}
These actions ensure that a node of type nodeType does not have children
of type nodeType, but all tensors w.r.t.\ the same nodeType are merged.
This is important, especially for Einsum and the algorithms presented in \Cref{sec:eqspp}
and \Cref{sec:strengthred}, which require all tensors participating in a
tensor operation to be known.
Note that we do not apply distributive laws, e.g.\ in the expression
$\sum_k A_{ik} (B_{kj} + C_{kj})$ the addition is always evaluated before
the Einstein sum.

For scalar multiplications we take similar actions to ensure that
ScalarMultiplication is always the parent node of an Einsum.

\subsection{Equivalent sparsity patterns}\label{sec:eqspp}
\begin{figure}
  \begin{subfigure}{\textwidth}
    \centering
    \includegraphics[width=\textwidth]{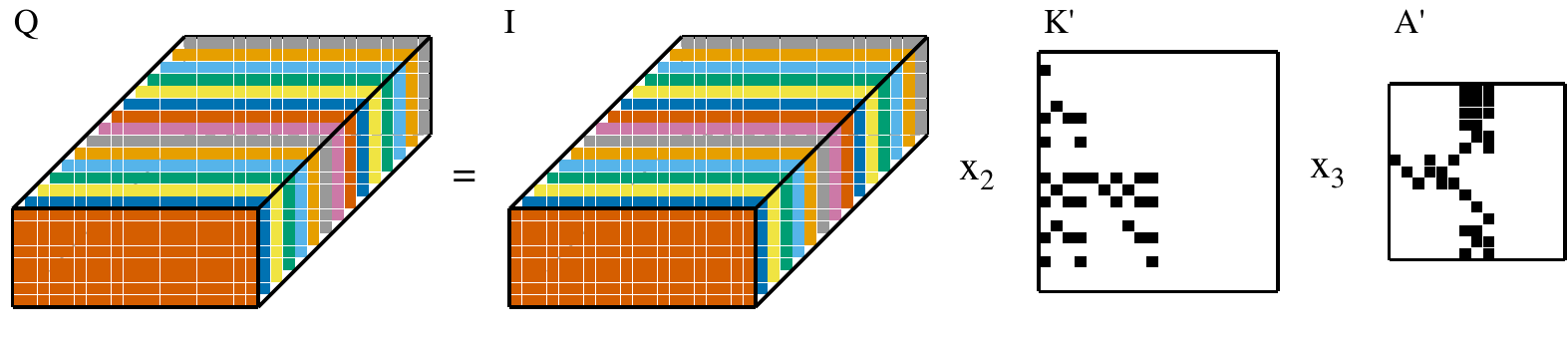}
    \caption{Original sparsity patterns, stored in memory.}\label{fig:volume_originalspp}
  \end{subfigure}
  \begin{subfigure}{\textwidth}
    \centering
    \includegraphics[width=\textwidth]{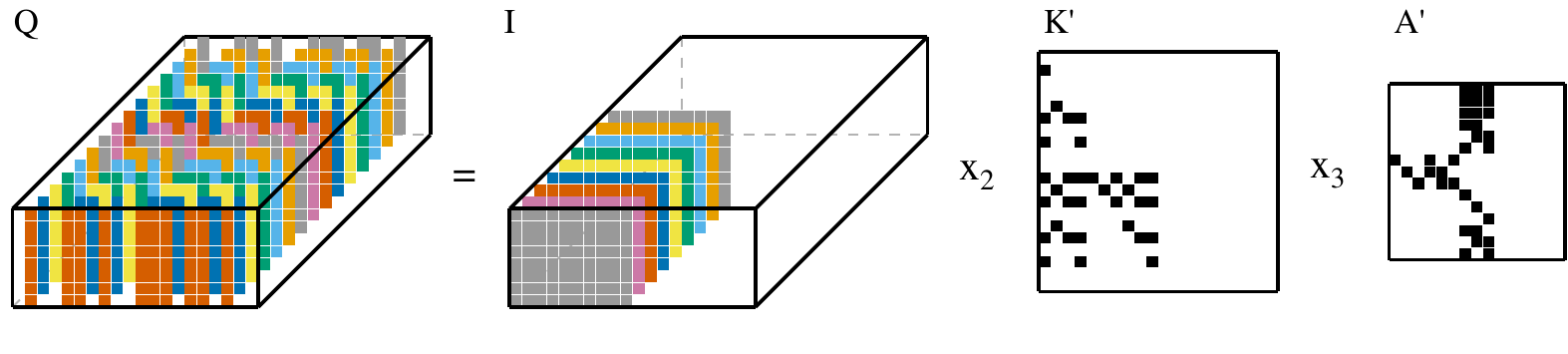}
    \caption{Equivalent sparsity patterns, used during contractions.}\label{fig:volume_equivalentspp}
  \end{subfigure}
  \caption{We show the kernel $Q_{skp} = I_{slq}K_{lk}A_{qp}$ (depicted with mode-n product,
  i.e.\ $Q=I\times_2 K^T\times_3A^T$), which is similar to an application example in \cite{Uphoff2016}
  but with an additional dimension added to the degrees of freedom.
  The degrees of freedom (I, Q) are given as full tensor (\Cref{fig:volume_originalspp}),
  but we detect that large blocks in I do not influence the final result,
  as they are multiplied with zeros in the tensor contractions (\Cref{fig:volume_equivalentspp}).
  }\label{fig:eqsppdemonstration}
\end{figure}

The notion of equivalent sparsity patterns (EQSPPs) and an algorithm to compute them
was introduced in \cite{Uphoff2016} for matrix chain products (MCP).
It is defined as the minimal sparsity patterns of the involved matrices
that leaves the result of an MCP unchanged.
In the latter definition, no cancellation is assumed, which means
that adding non-zero entries never results in zero \cite{Cohen1998}.
First, the assumption allows to work with boolean matrices alone.
Second, we assume that the sparsity pattern is known but the values are unknown
(or known only at run-time), which has the following implication:
In an inner product, which is the core of matrix multiplication,
we can always find a pair of non-orthogonal vectors.
That is, if the values of at least one matrix are unknown,
we cannot guarantee cancellation.

The concept of EQSPPs can be illustrated with the following example
(where the entries may be either scalars or dense block matrices):
$$
\begin{pmatrix}
 K_1 & 0
\end{pmatrix}
\begin{pmatrix}
 Q_{11} & Q_{12} \\
 Q_{21} & Q_{22} \\
\end{pmatrix}
\begin{pmatrix}
 A_1 \\ 0
\end{pmatrix} =
K_1Q_{11}A_1
$$
Here, the multiplication with the first matrix removes
$Q_{21}$ and $Q_{22}$ from the right-hand side and the multiplication
with the third matrix removes $Q_{12}$ and $Q_{22}$ from the right-hand side.
Thus, the equivalent sparsity pattern of the second matrix is only non-zero
for the top-left block ($Q_{11}$).

In \cite{Uphoff2016} an algorithm to compute EQSPPs for MCPs is presented,
which is based on a graph based representation.
The extension of the MCP graph to general tensor operations is not straightforward,
thus we derive an algorithm that is not based on a graph
in the remainder of this section.

\subsubsection{Formalisation of tensor operations}
So far we have defined tensor operations via handwaving and examples,
but here we need to introduce a formal definition of the
``\texttt{*}''-expression in our DSL.

First of all, we formalise the labelling of the dimensions of a tensor $T$.
It consists of a set of indices and an assignment of indices to dimensions.
We denote the set of indices with calligraphic letters,
e.g.\ $\mathcal{T} \subset \mathscr{P}(\Sigma)$, which must be a subset of the
power-set of an alphabet $\Sigma$, where e.g.\ $\Sigma=\{a,\dots,z\}$.
We require $|\mathcal{T}|=\dim(T)$.\footnote{Note that this requirement excludes
the trace of a matrix $T_{ii}$ or similar expressions.
However, one may write the trace as $T_{ij}\delta_{ij}$,
which conforms with the requirement.}
The assignment of indices to dimensions is a bijective mapping
$P_T : \mathcal{T} \rightarrow \{1,\dots,\dim(T)\}$.
The range of an index $i$ is always $I_{P_T(i)}:=\{0,\dots,\size(i)-1\}$,
where $\size(i)\in\mathbb{N}$.
\begin{ex}
 Let $A\in\mathbb{R}^{2\times 2\times 2}$.
 The index set of \texttt{A['jik']} is $\mathcal{A}=\{\mathtt{i},\mathtt{j},\mathtt{k}\}$, the map $P_A$
 is given by $P_A(\mathtt{j}) = 1, P_A(\mathtt{i}) = 2, P_A(\mathtt{k}) = 3$, and $\size(\mathtt{i})=\size(\mathtt{j})=\size(\mathtt{k})=2$.
\end{ex}
Second, assume a ``\texttt{*}''-expression is formed by $n$ tensors $T^{1},\dots,T^{n}$,
and the result tensor $U$.
Einstein's convention requires us to sum over all indices that appear
twice, which we formalise in the following:
The index space of tensor $T^k$ is $\mathbb{S}(T^k) = I_1\times \dots\times I_{\dim(T^k)}$,
where $I_d\subset\mathbb{N}_0$ is the index range of the $d$-th dimension.
In order to span an iteration space, we introduce 
the global index set $\mathcal{G} = \bigcup_{k}\mathcal{T}^k$,
and the global index space
$\mathbb{G} = \bigtimes_{k=1}^{N} \{0,\dots,\size(P_G^{-1}(k))-1\}$,
where $N=|\mathcal{G}|$.
Similar to a tensor, we assign global indices to dimensions
of the global index space with the bijective function
$P_G : \mathcal{G} \rightarrow \{1,\dots,N\}$.\footnote{The function $P_G$
may for example be obtained by ordering the global index set lexically.}
In order to ``access'' entries of a tensor, we introduce projection and permutation functions
$\pi_{T^k} : \mathbb{G} \rightarrow \mathbb{S}(T^k)$, where
$$
\pi_{T^k}(i_1,\dots,i_N) = \left(i_{P_G\left(P_{T^k}^{-1}(d)\right)} : 1 \leq d \leq |\mathcal{T}^k|\right).
$$
Entries of $T^k$ are accessed with the notation $T^k[{\pi_{T^k}(i)}]$, where $i\in\mathbb{G}$,
or shortly with $T^k_{\tp{i}}$, where the permutation and
projection function is understood from context.
We define the restriction of the global index set to an index
$i\in\mathbb{S}(T)$ as $R_T(i) = \{j\in\mathbb{G} : \pi_T(j) = i\}$.

With the set up formalism, we are able to precisely state the meaning
of a ``\texttt{*}''-expression:
\begin{equation}\label{eq:gentensorop}
 \forall i \in \mathbb{S}(U): U_{i} = \sum_{j\in R_U(i)} T^{1}_{\tp{j}}\cdot\ldots\cdot T^{n}_{\tp{j}}.
\end{equation}
\begin{ex}
 Consider tensors $A\in\mathbb{R}^{2\times 2\times 2}$ and $B\in\mathbb{R}^{2\times 2}$.
 In Einstein notation, the mode-2 product is written as $C_{ijk} = A_{ilk}B_{jl}$
 and in our DSL as \texttt{C['ijk'] <= A['ilk'] * B['jl']}.
 
 In our formal notation, the global index space is
 $\mathbb{G} = I\times J\times K\times L$, where $I=J=K=L=\{0,1\}$.
 The permutation and projection functions are $\pi_A:(i,j,k,l) \mapsto (i,l,k)$,
 $\pi_B:(i,j,k,l) \mapsto (j,l)$, and $\pi_C:(i,j,k,l) \mapsto (i,j,k)$.
 We may compute the entries of C with
 $$
  (i,j,k)\in I\times J\times K: C_{(i,j,k)} = \sum_{(i',j',k',l') \in \{i\}\times\{j\}\times \{k\}\times L} A_{(i',l',k')} B_{(j',l')}.
 $$
\end{ex}

\subsubsection{Computation of EQSPPs for tensor operations}
As we assume no cancellation in sums, it is sufficient to model
a sparsity pattern as a boolean tensor.
The EQSPP of a tensor $T^k$ is called $\hat{\Theta}^k$, where
$\hat{\Theta}^{k}_q \in \{0,1\},\; q\in\mathbb{S}(T^k)$.
The ``hat''-tensor $\hat{T}^k$ is the original tensor $T^k$ masked
with $\hat{\Theta}^{k}$.
In operations involving $\hat{\Theta}^{k}$ we identify $+$ with $\vee$ and $\cdot$
with $\wedge$.

Having set up the necessary formalism, we are ready to formally define EQSPPs:
\begin{definition}
 Assuming no cancellation, we call $\hat{\Theta}^{k}$
 equivalent sparsity patterns w.r.t.\ \eqref{eq:gentensorop} if
 \begin{enumerate}
  \item $U = \hat{U}$, where $\forall i\in\mathbb{S}(U): \hat{U}_{i} = \sum_{j\in R_U(i)} \hat{T}^{1}_{\tp{j}}\cdot\ldots\cdot \hat{T}^{n}_{\tp{j}}$
  and $\hat{T}^k_q = \left\{\begin{array}{ccc}
                       T^k_q & \text{ if } & \hat{\Theta}_q^k = 1 \\
                       0 & \text{ if } & \hat{\Theta}_q^k = 0. \\
                      \end{array}\right.$,
  \item The number of non-zeros of $\hat{\Theta}^{k}$, $k=1,\dots,n$, is minimal, that is,
	we cannot set a non-zero to zero without implying $\exists i\in\mathbb{S}(U):U_i \neq \hat{U}_i$.
 \end{enumerate}
\end{definition}
As we assume no cancellation, we may disregard the sums over indices,
and instead need to only consider a product over all tensors
on the right-hand side:
\begin{lemma}\label{lemma:outerproduct}
  The EQSPPs w.r.t.\ \eqref{eq:gentensorop} are equivalent to
  the EQSPPs w.r.t.\
  \begin{displaymath}
  \forall j\in\mathbb{G}: Z_j := T^{1}_{\tp{j}}\cdot\ldots\cdot T^{n}_{\tp{j}}.
  \end{displaymath}
\end{lemma}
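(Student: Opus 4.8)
The plan is to show that the two EQSPP problems have the same feasible sets, hence the same minimisers. Concretely, I would prove that a choice of boolean tensors $\hat\Theta^1,\dots,\hat\Theta^n$ satisfies condition~(1) of the EQSPP definition with respect to \eqref{eq:gentensorop} \emph{if and only if} it satisfies the analogous condition with $U,\hat U$ replaced by $Z,\hat Z$, where $\hat Z_j := \hat T^1_{\tp j}\cdots\hat T^n_{\tp j}$. Once the feasible sets coincide, condition~(2) (minimality of the total number of non-zeros) picks out the same tuples $\hat\Theta^k$ in both cases, so the EQSPPs are literally the same objects, which is the assertion.

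First I would fix the setting: since we assume no cancellation, the sum in \eqref{eq:gentensorop} may be treated as a boolean disjunction, so for every $i\in\mathbb{S}(U)$ the sparsity pattern of $U_i$ is $\bigvee_{j\in R_U(i)}\bigwedge_{k=1}^n \theta^k_{\tp j}$, where $\theta^k$ denotes the (true, unmasked) sparsity pattern of $T^k$; likewise $\hat U_i$ has pattern $\bigvee_{j\in R_U(i)}\bigwedge_{k} \hat\Theta^k_{\tp j}$, and $Z_j$, $\hat Z_j$ have patterns $\bigwedge_k\theta^k_{\tp j}$ and $\bigwedge_k\hat\Theta^k_{\tp j}$ respectively. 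The key elementary observation is that $\hat Z_j \le Z_j$ pointwise always (masking only removes non-zeros, since $\hat\Theta^k\le\theta^k$ entrywise by feasibility / by the no-cancellation convention), and similarly $\hat U_i\le U_i$. Then the forward direction is immediate: if $\hat Z_j = Z_j$ for all $j\in\mathbb{G}$, then in particular for each $i$ the disjunctions over $R_U(i)$ agree termwise, so $\hat U_i=U_i$. For the converse I would argue contrapositively: suppose $\hat Z_{j^\*}\neq Z_{j^\*}$ for some $j^\*\in\mathbb{G}$, i.e.\ $Z_{j^\*}=1$ but $\hat Z_{j^\*}=0$. Set $i^\* := \pi_U(j^\*)\in\mathbb{S}(U)$, so $j^\*\in R_U(i^\*)$. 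Here the crucial point — and I expect this to be the one delicate step — is that the index $j^\*$ contributes to $U_{i^\*}$ \emph{without interference from other terms in the sum}: because no cancellation is assumed and because the \emph{values} of at least one tensor are treated as unknown (the standing assumption recalled just before the formalism), a single right-hand-side product term being non-zero forces the corresponding output entry to be non-zero — contributions from other $j\in R_U(i^\*)$ cannot cancel it. Hence $U_{i^\*}=1$. On the other hand $\hat Z_{j^\*}=0$ means $\hat\Theta^k_{\tp{j^\*}}=0$ for some $k$; but one then has to rule out the possibility that $\hat U_{i^\*}$ is nonetheless $1$ because some \emph{other} $j\in R_U(i^\*)$ has $\hat Z_j=1$. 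This does not by itself give a contradiction with $\hat U=U$.

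The main obstacle is therefore this last gap, and I would close it exactly as the definition of EQSPP is designed to allow. The relevant statement is not "for every feasible $\hat\Theta$, $\hat Z=Z$", but rather "the \emph{minimal} feasible $\hat\Theta$ for the $U$-problem is also minimal feasible for the $Z$-problem, and vice versa". So I would instead prove the lemma at the level of minimisers directly: (a) if $\hat\Theta$ is feasible for the $Z$-problem ($\hat Z=Z$) then by the forward direction it is feasible for the $U$-problem, so $\mathrm{opt}_U\le\mathrm{opt}_Z$; (b) conversely, given an \emph{optimal} $\hat\Theta$ for the $U$-problem, I claim $\hat Z=Z$. If not, pick $j^\*$ with $Z_{j^\*}=1$, $\hat Z_{j^\*}=0$; since $\hat\Theta^k\le\theta^k$ and some $\hat\Theta^k_{\tp{j^\*}}=0$, consider flipping that single bit to $1$. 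This only adds non-zeros, so it cannot fix a violation of $\hat U=U$ — but there is no violation to begin with, $\hat U=U$ holds. The point is that by minimality of $\hat\Theta$, every non-zero of every $\hat\Theta^k$ is \emph{needed}, i.e.\ setting it to zero breaks $\hat U=U$; run this argument on each $\hat\Theta^k_q$ to characterise the support of an optimal $\hat\Theta$ as exactly the set of $(k,q)$ such that $q=\pi_{T^k}(j)$ for some $j\in\mathbb{G}$ with $\bigwedge_{k'}\theta^{k'}_{\tp j}=1$ — i.e.\ precisely the support forced by $\hat Z = Z$. This common combinatorial characterisation of the optimal support, phrased through the no-cancellation / unknown-values assumption exactly as in the surrounding text, is what identifies the two optimisation problems; the remaining bookkeeping with $\pi$, $R_U$, and the boolean algebra is routine.
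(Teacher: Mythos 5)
Your proposal is correct and takes essentially the same route as the paper's proof: the forward direction is the identity $U_i=\sum_{j\in R_U(i)}Z_j$, and the converse rests, exactly as in the paper, on the no-cancellation / unknown-values assumption to conclude that a non-zero product term $Z_f$ dropped by a smaller sparsity pattern cannot be absorbed by the remaining summands of $U_{\pi_U(f)}$, so some output entry changes. The explicit support characterisation you sketch at the end is really the content of the Proposition that follows the lemma, but it does not change the substance of the argument.
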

\begin{proof}
 Clearly, $\forall i\in\mathbb{S}(U): U_i = \sum_{j\in R_U(i)} Z_{j} = \sum_{j\in R_U(i)} \hat{Z}_{j}$, hence condition 1 is fulfilled.
 
 In order to check condition 2, assume there is another set of EQSPPs, $\widebar{\Theta}^k$,
 which has less non-zeros than $\hat{\Theta}^k$.
 Then $\exists f\in\mathbb{G}: Z_{f} = \hat{Z}_{f}\neq\widebar{Z}_{f}$
 (otherwise $\hat{T}^k$ would not be minimal).
 As $\bigcup_{i\in\mathbb{S}(U)} R_U(i) = \mathbb{G}$ and $Z_{f}$ does not get cancelled in sums,
 it follows that there exists an index $g\in\mathbb{S}(U)$
 where $\widebar{U}_g \neq U_g$. 
\end{proof}

Our main result is that we may cast the computation of EQSPPs
as boolean tensor operations on the original sparsity patterns:
\begin{prop}
  The EQSPPs w.r.t.\ \eqref{eq:gentensorop} are given by
  \begin{equation}\label{eq:compeqspp}
  \forall q \in \mathbb{S}(T^k): \hat{\Theta}^{k}_{q} := \sum_{l\in R_{T^k}(q)} \Theta^{1}_{\tp{l}}\cdot\ldots\cdot\Theta^{n}_{\tp{l}},
  \end{equation}
  where $\Theta^l$ is the sparsity pattern of $T^l$, $l=1,\dots,n$.
\end{prop}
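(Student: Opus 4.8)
The plan is to prove the two EQSPP conditions for the $\hat\Theta^k$ defined by \eqref{eq:compeqspp}, using \Cref{lemma:outerproduct} to reduce to the pure-product form $Z_j := T^1_{\tp{j}}\cdot\ldots\cdot T^n_{\tp{j}}$, so that the indices to be summed over can be ignored entirely. The key observation that drives everything is the following \emph{support identity}: for the boolean product $\zeta_j := \Theta^1_{\tp{j}}\wedge\ldots\wedge\Theta^n_{\tp{j}}$ over $j\in\mathbb{G}$, an index $l\in\mathbb{G}$ satisfies $\zeta_l = 1$ if and only if $T^k_{\tp{l}}$ lies in the support of $Z$, i.e. contributes a (non-cancelling) term; and by definition of the restriction sets, $\hat\Theta^k_q = \bigvee_{l\in R_{T^k}(q)}\zeta_l$, so $\hat\Theta^k_q = 1$ exactly when the entry $T^k_q$ participates in \emph{some} non-zero product $Z_l$.

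First I would verify condition~1, namely $U=\hat U$, which by \Cref{lemma:outerproduct} is equivalent to $Z = \hat Z$, i.e. $Z_j = \hat T^1_{\tp{j}}\cdot\ldots\cdot\hat T^n_{\tp{j}}$ for all $j\in\mathbb{G}$. Fix $j$. If $Z_j = 0$ then some factor $T^k_{\tp{j}}$ is zero, hence $\Theta^k_{\tp{j}} = 0$; since $j\in R_{T^k}(\pi_{T^k}(j))$, the term $\Theta^1_{\tp{j}}\cdot\ldots\cdot\Theta^n_{\tp{j}}$ in \eqref{eq:compeqspp} is $0$, but that does not immediately force $\hat\Theta^k_{\pi_{T^k}(j)}=0$ — rather it forces $\hat Z_j = 0$ directly, because the \emph{same} factor index $k$ appears, and $\hat Z_j = \hat T^1_{\tp{j}}\cdot\ldots\cdot\hat T^n_{\tp{j}}$ vanishes as soon as one masked factor $\hat T^k_{\tp{j}}$ is compared against a zero original entry. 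Conversely, if $Z_j \neq 0$ then every $\Theta^k_{\tp{j}} = 1$, so the single term indexed by $l=j$ in the sum \eqref{eq:compeqspp} already gives $\hat\Theta^k_{\pi_{T^k}(j)} = 1$ for every $k$; therefore $\hat T^k_{\tp{j}} = T^k_{\tp{j}}$ for all $k$ and $\hat Z_j = Z_j$. This proves $Z=\hat Z$ and hence condition~1.

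Next I would verify condition~2, minimality. Suppose for contradiction that some entry $\hat\Theta^k_q = 1$ can be set to $0$ without changing any $U_i$ — equivalently, by \Cref{lemma:outerproduct}, without changing any $Z_j$. Since $\hat\Theta^k_q = 1$, by the support identity there is some $l\in R_{T^k}(q)$ with $\zeta_l = 1$, i.e. $\Theta^m_{\tp{l}} = 1$ for all $m$. In particular $T^k_q = T^k_{\tp{l}}$ is (generically) non-zero. Now I invoke the no-cancellation assumption together with the fact that values are unknown: because all $n$ factors $T^m_{\tp{l}}$ have non-zero sparsity entries, we may choose values making the product $Z_l = \prod_m T^m_{\tp{l}}$ non-zero; masking out entry $q = \pi_{T^k}(l)$ of $T^k$ replaces the factor $\hat T^k_{\tp{l}}$ by $0$ and makes $\hat Z_l = 0 \neq Z_l$, contradicting the assumed invariance. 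Hence no such entry exists and $\hat\Theta^k$ is minimal.

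The main obstacle I anticipate is the bookkeeping around the projection/permutation maps $\pi_{T^k}$ and the restriction sets $R_{T^k}(q)$ — in particular making the informal ``support identity'' precise: one must check that for $l\in R_{T^k}(q)$ the factor $T^k_{\tp{l}}$ really is $T^k_q$ (immediate from $\pi_{T^k}(l)=q$), and that $\bigcup_{i\in\mathbb{S}(U)}R_U(i) = \mathbb{G}$ is what lets a local change at some $l\in\mathbb{G}$ propagate to a change in some $U_i$ (already used in the proof of \Cref{lemma:outerproduct}). A secondary subtlety is being careful that ``no cancellation'' must be used for \emph{both} directions of condition~2: it guarantees that a single non-zero product term $Z_l$ cannot be silently killed by other terms in the sum defining $U_i$, and it guarantees that distinct masked entries do not conspire. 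Once these are stated cleanly the argument is short; I would keep the exposition at the level of the support identity rather than re-deriving index gymnastics each time.
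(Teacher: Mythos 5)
Your proposal is correct and follows essentially the same route as the paper's proof: reduce to the pure product $Z$ via \Cref{lemma:outerproduct}, use the fact that $j\in R_{T^k}(\pi_{T^k}(j))$ so that the term $l=j$ already forces $\hat{\Theta}^k_{\pi_{T^k}(j)}=1$ whenever $\zeta_j=1$ (your ``support identity'' is exactly \eqref{eq:producttricks} combined with the absorption step), and for minimality extract a witness $p\in R_{T^k}(q)$ with $\zeta_p=1$ whose product is killed by masking entry $q$. The only difference is cosmetic: you argue at the level of tensor values with a case split on $Z_j$, whereas the paper phrases the same steps as boolean identities (idempotence and absorption) on the sparsity patterns.
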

\begin{proof}
We are going to show that \eqref{eq:compeqspp} computes the EQSPPs
for the product $Z$ of tensors $T^1,\dots,T^n$
(see \Cref{lemma:outerproduct}).
In order to satisfy condition 1, we only need to show that the sparsity
patterns of $Z$ and $\hat{Z}$ are identical.
The sparsity pattern of $Z$ is given by $\zeta$, where
$
 \forall j \in \mathbb{G}: \zeta_j := \Theta^{1}_{\tp{j}}\cdot\ldots\cdot\Theta^{n}_{\tp{j}},
$
and the sparsity pattern of $\hat{Z}$ is given by $\hat{\zeta}$, where
$
 \forall j \in \mathbb{G}: \hat{\zeta}_j := \hat{\Theta}^{1}_{\tp{j}}\cdot\ldots\cdot\hat{\Theta}^{n}_{\tp{j}}.
$

First, we show that
$\forall j\in\mathbb{G}: \hat{\zeta}_j = \zeta_j$.
Note that
\begin{equation}\label{eq:producttricks}
  \hat{\Theta}^{k}_{q} =
  \sum_{l\in R_{T^k}(q)} \Theta^{1}_{\tp{l}}\cdot\ldots\cdot\Theta^{n}_{\tp{l}} =
  \sum_{l\in R_{T^k}(q)} \zeta_l =
  \Theta_q^k\cdot \sum_{l\in R_{T^k}(q)} \zeta_l,
\end{equation}
where the latter equality follows from idempotence and
$\Theta_\tp{l}^k=\Theta_q^k$ if $l\in R_{T^k}(q)$ by the definition of $R_{T^k}$.
Using \eqref{eq:producttricks} we obtain
\begin{equation}\label{eq:zetaeqzetahat}
\hat{\zeta}_j =
\hat{\Theta}^{1}_{\tp{j}}\cdot\ldots\cdot\hat{\Theta}^{n}_{\tp{j}} =
\zeta_{j}\cdot
\left(\sum_{l\in R_{T^1}(\pi_{T^1}(j))} \zeta_{l}\right)\cdot\ldots\cdot
\left(\sum_{l\in R_{T^n}(\pi_{T^n}(j))} \zeta_{l}\right) =
\zeta_{j},
\end{equation}
where the last identity follows from $j\in R_{T^k}(\pi_{T^k}(j))$ and the absorption law.
\Cref{eq:zetaeqzetahat} is true for all $j\in\mathbb{G}$, hence $\zeta=\hat{\zeta}$.

Second, in order to satisfy condition 2, we show that the number of non-zeros is minimal:
Assume that there exist another set of sparsity patterns $\widebar{\Theta}^k$,
with less non-zeros than $\hat{\Theta}^k$.
Then, $\exists k\in\{1,\dots,n\}:\exists q\in\mathbb{S}(T^k):
\widebar{\Theta}_q^k=0 \wedge \hat{\Theta}_q^k=1$.
Clearly, $\Theta_q^k=1$ and $\sum_{l\in R_{T^k}(q)} \zeta_l = 1$, because
otherwise $\hat{\Theta}_q^k\neq 1$.
Hence, there must be at least one $p\in R_{T^k}(q)$ such that $\zeta_p=1$.
As $\pi_{T^k}(p)=q$ it immediately follows that
$$
\widebar{\zeta}_p :=
\widebar{\Theta}^1_\tp{p}
\cdot\ldots\cdot\widebar{\Theta}^{k-1}_\tp{p}
\cdot\widebar{\Theta}^k_q
\cdot\widebar{\Theta}^{k+1}_\tp{p}
\cdot\ldots\cdot\widebar{\Theta}^n_\tp{p} = 0
\neq \zeta_p = 1,
$$
which violates condition 1.
\end{proof}

\subsubsection{Implementation and discussion}
We have applied our EQSPP algorithm to an application example
from \cite{Uphoff2016}, where we extended the degrees of freedom
by an additional dimension (cf.~\Cref{sec:appseissol}).
The original SPPs and EQSPPs can be seen in \Cref{fig:eqsppdemonstration}.
We observe that the non-zero entries of $K$ and $A$ induce
additional zeros in $I$ which may be disregarded when computing
the tensor operation.
For example, in the evaluation of $I_{slq}K_{lk}$ we may restrict the
loop ranges of $q$ and $l$.

The implementation of EQSPP-computation with \eqref{eq:compeqspp} is straightforward,
as we require the same kind of tensor operations that we support within
\yateto, the only difference being that the tensors are boolean.
E.g.\ we may apply strength reduction (cf.~\Cref{sec:strengthred})
in order to reduce the amount of computations.
Still, our method to compute EQSPPs is more expensive to compute than
to simply evaluate the original tensor operation.
Nevertheless, the cost of computing EQSPPs is negligible in comparison
to the possibly millions of times a kernel is called within a DG scheme.

\subsection{Strength reduction}\label{sec:strengthred}
We already mentioned in \Cref{sec:strengthred} that finding the sequence
of tensor operations with minimal operation count is an NP-hard problem,
but an efficient enumeration procedure exists for tensor dimensions
appearing in practice \cite{Lam1997}.
The same enumeration procedure may also be used when sparse tensors are
involved, but in this case the sparsity patterns of intermediate results
are required, or at least an estimate of the sparsity \cite{Lam1999b}.

In \yateto~we assume that tensors are small enough, such that it is feasible
to explicitly compute all intermediate products during strength reduction.
The number of operations is then determined in the following way \cite{Lam1999b}:
For a multiplication formula $V[\dots] = X[\dots] \times Y[\dots]$
the number of operations is equal to the number of non-zeros in $V$.
For a summation formula $W[\dots] = \sum_i Z[\dots]$ the number of
operations is equal to the number of non-zeros in $Z$ minus the number
of non-zeros in $W$.

An example of an intermediate AST after strength reduction is shown 
in \Cref{fig:strengthred_ast}.

\begin{figure}
 \begin{subfigure}[b]{.3\linewidth}
  \centering
  \begin{tikzpicture}[sibling distance=3em,
  every node/.style = {align=center}]]
  \node {$T^{(1)}_{ij}:\text{Einsum}$}
      child { node {$A_{lj}$} }
      child { node {$B_{ikl}$} }
      child { node {$w_{k}$} };
\end{tikzpicture}
  \caption{Initial AST from DSL.}
  \label{fig:einsum_ast}
 \end{subfigure}
 \begin{subfigure}[b]{.3\linewidth}
  \centering
  \begin{tikzpicture}[sibling distance=4em, level distance=3em,
  every node/.style = {align=center}]]
  \node {$T^{(4)}_{ij}:\sum_l$}
    child { node {$T^{(3)}_{lji} : \times$}
      child { node {$A_{lj}$} }
      child { node {$T^{(2)}_{il}:\sum_k$}
	child { node {$T^{(1)}_{ikl}:\times$}
	  child { node {$B_{ikl}$} }
	  child { node {$w_{k}$} }
	}
      }
    };
\end{tikzpicture}
  \caption{AST after strength reduction.}
  \label{fig:strengthred_ast}
 \end{subfigure}
 \begin{subfigure}[b]{.3\linewidth}
  \centering
  \begin{tikzpicture}[sibling distance=4em, level distance=4em,
  every node/.style = {align=center}]]
  \node {$T^{(2)}_{ij}: \text{LoG}$}
    child { node {$T^{(1)}_{il}: \text{LoG}$}
      child { node {$B_{ikl}$} }
      child { node {$w_{k}$} }
    }
    child { node {$A_{lj}$} };
\end{tikzpicture}
  \caption{Final AST.}
  \label{fig:final_ast}
 \end{subfigure}
 \caption{Overview of major stages during AST transformation of an Einsum node (left).
 After determining the equivalent sparsity patterns, the Einsum node is transformed
 during strength reduction, yielding an operation-minimal tree (middle).
 This tree is binary and consists of Product and IndexSum nodes only.
 Finally, a mapping to Loop-over-GEMM is found which minimises the cost function
 described in \Cref{sec:optindperm} (right).}
\end{figure}
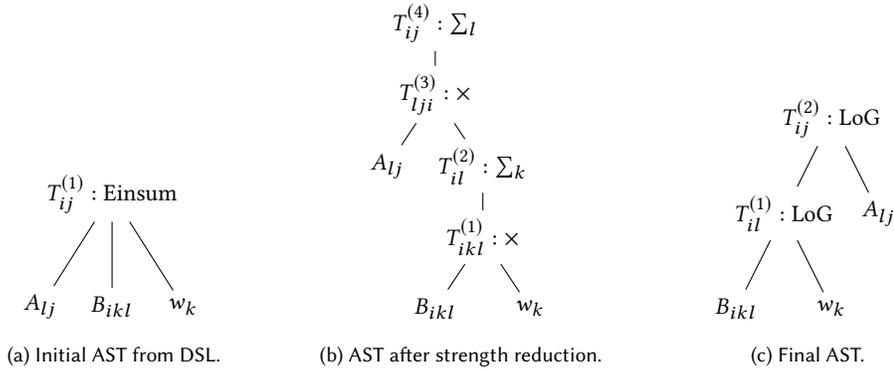

\subsection{Memory layouts}
Following \Cref{fig:overview} the next step is to compute memory layouts of tensors.
The memory layout influences the possibility to fuse indices in a
LoG-implementation.
Thus the layout influences the cost function in \Cref{sec:optindperm} and 
needs to be fixed at this point already.
In the following we give an overview over the two classes of memory layouts
that are currently supported.

\subsubsection{Dense layout}
Let $A\in\R^{n_1\times n_2\times \dots \times n_d}$.
We call the tuple $(n_1,n_2,\dots,n_d)$ the shape of a tensor.
A simple memory layout of a dense tensor is the
``column-major'' storage:
$$
A_{i_1\dots i_d} := A\left[\sum_{k=1}^d i_k s_k\right],
$$
where the so-called stride is given by $s_k=\prod_{l=0}^{k-1}n_l$ with $n_0:=1$.
That is, the tensor is stored linearly in a 1D array, such that
the first dimension varies fastest and the last dimension varies slowest in memory.
The stride $s_k$ gives us the number of floating numbers we have to skip when we 
increase index $i_k$ by one.

From \Cref{sec:eqspp} we expect that we also have tensors with large zero blocks,
where it would be wasteful to store the full tensor.
Hence, our next evolution is the bounding-box column-major storage:
$$
A_{i_1\dots i_d} := A\left[\sum_{k=1}^d (i_k-b_k) t_k\right],
$$
where $t_k=\prod_{l=0}^{k-1}(B_l-b_l)$ and $0 \leq b_k < B_k \leq n_k$.
This memory layout models the case where $A$ is non-zero within the 
index set $[b_1,B_1)\times\dots\times [b_d,B_d)$ and zero otherwise.

Finally, it might be beneficial to align the fastest dimension.
That is, the number of bytes in a fibre in the first dimension should
be divisible by the architecture's SIMD vector length.
In some cases, one has to add artificial zeros to the first dimension,
hence we allow that the bounding interval of the first
dimension is larger than the size of the first dimension.
Given an initial (minimal) bounding interval $[b_1, B_1)$ we
align the bounding interval as in \cite{Uphoff2016}, that is,
$$
[b_1', B_1') := [b_1 - b_1 \bmod v, B_1 + (v - B_1 \bmod v) \bmod v),
$$
where $v$ is the number of floating point values that fit into
a SIMD vector (e.g.\ using AVX-512 one sets $v=8$ for double precision
or $v=16$ for single precision).

We conclude the presentation of dense layouts with a possible pitfall
of using bounding boxes and alignment.
It might be beneficial to fuse multiple indices when mapping a tensor
contraction to a Loop-over-GEMM (see \Cref{sec:logintro}).
Fusing indices $i_a\dots i_b$ is always possible when \cite{Shi2016}
\begin{equation}\label{eq:compfuseindices}
\forall i \in [a,b): t_{i+1} = n_it_i.
\end{equation}
Other cases would require more involved compatibility conditions.
For example in the GEMM $C_{(ij)(kl)} = A_{(ij)m}B_{m(kl)}$, where
artificial zeros are added for dimension $i$ in $A$ and dimension $k$
in $B$, then one may only fuse $(ij)$ and $(kl)$ whenever
$C$ has the same number of artificial zeros in dimension $i$ and $k$.
Otherwise one needs temporary storage and an additional data copy.
In order to avoid possible data copies and complications arising
in subsequent optimisation steps we only allow fused indices
when \eqref{eq:compfuseindices} is fulfilled.
Conversely, aligned layouts or bounding box layouts have to be
considered carefully as they may prohibit fusing indices, which
leads to smaller (and possibly slower) GEMMs.

\subsubsection{Compressed sparse column (CSC) layout}
Sparse matrices may appear in discontinuous Galerkin methods,
e.g.\ the stiffness matrices are sparse when an orthogonal set of
basis functions is used.
We have a limited support for the CSC format:
CSC matrices may appear in GEMM or Loop-over-GEMM calls but only
in a sparse x dense or a dense x sparse GEMM.

\subsubsection{Other (sparse) formats}
It might be surprising that we describe Equivalent Sparsity Patterns
in great detail but currently do not offer a genuine sparse tensor
memory layout.
First of all, we want to point out that if an EQSPP induces a large zero
block, then we are able to exploit this block using the bounding box
memory layout or we may let a LoG operate only on a subtensor.
Second, to the authors' knowledge, an efficient code generator or efficient
routines for general small sparse tensors do currently not exist.

\subsection{Optimal index permutations}\label{sec:optindperm}
\begin{figure}
 \begin{tikzpicture}[level distance=3em,
  every node/.style = {align=center},
  level 1/.style={sibling distance=14em},
  level 2/.style={sibling distance=9em}, 
  level 3/.style={sibling distance=3em}]
  \node {$Q_{skp}:$ Contraction}
    child { node {$T^{(2)}_{pns} :$ Contraction}
      child { node {$A_{pq}$} }
      child { node {$T^{(1)}_{qns}:$ Contraction}
	child { node {$R_{ln}$} }
	child { node {$I_{slq}$} }
      }
    }
    child { node {$T^{(3)}_{kn} :$ Contraction}
      child { node {$f_{mn}$} }
      child { node {$\hat{R}_{mk}$} }
    };
\end{tikzpicture}
 \caption{A possible intermediate AST of $Q_{skp}:=\hat{R}_{km}f_{mn}R_{ln}I_{slq}A_{pq}$.
 We have the freedom to choose the index permutations of the temporary tensors
 $T^{(1)}$, $T^{(2)}$, and $T^{(3)}$.}
  \label{fig:permutation_tree}
\end{figure}
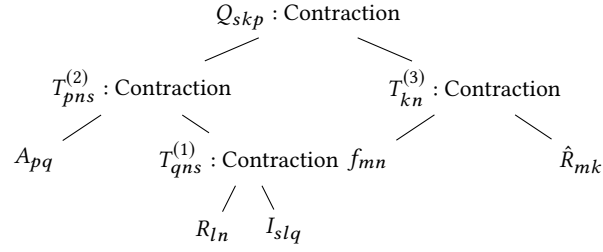
The strength reduction step converts each Einsum node to a binary tree
consisting only of IndexSum and Product nodes.
In a first step, we identify contractions in the tree,
which are (multiple) IndexSum nodes followed by a single Product node.
A sub-tree that was identified to be a contraction is replaced
by a new node of type Contraction.
For example, we might end up with the (sub-)tree shown in \Cref{fig:permutation_tree}.

Each Contraction node shall now be mapped to an implementation as Loop-over-GEMM.
In general, the mapping to GEMM is not unique, but several implementations are
possible (e.g.\ \cite[Table~II]{Shi2016}).
A systematic way to list all possible implementations is given in \cite[Listing~7]{Springer2018}.
As accurately predicting the performance of each implementation is difficult,
heuristics may be employed to select the most promising candidate,
e.g.\ one may choose the variant with the largest GEMM~\cite{Springer2018}.

The algorithm to list all LoG-implementations in \cite{Springer2018} assumes
that the order of all indices are known.
In our case we have temporary tensors in which an order of
indices is not prescribed.
(E.g.\ $T^{(1)}$, $T^{(2)}$, and $T^{(3)}$ in \Cref{fig:permutation_tree}.)
The index order has to be chosen carefully as it decides possible mappings
to GEMM and there are even cases where non-unit stride GEMMs are required \cite[Table~II]{Shi2016}.
Moreover, the index order of a temporary result influences possible mappings
to GEMM in nodes that involve the temporary result.

\subsubsection{Optimal index permutations}
Our approach to select the index orders of temporary tensors is to solve the
following discrete optimisation problem:
Let $\vertices(r)$ be the set of vertices of an AST with root $r$.
We denote the set of children of a vertex $v$ with $\children(v)$, and
the set of descendants of a vertex $v$ with $\descendants(v) = \vertices(v)\setminus\{v\}$.
For each vertex $v$ we have a set of permissible index permutations
$\mathcal{P}_{v}$.
We introduce variables~$x_v$, for all $v\in\vertices(r)$, where $x_v\in\mathcal{P}_v$
i.e.\ variable $x_v$ contains the index permutation for vertex $v$.
The Cartesian product of all permissible index permutations is called the 
configuration space.
The optimisation problem is to find a configuration which minimises a cost function
over all possible configurations, that is,
\begin{equation}\label{eq:permopt}
  c^* = \min_{x_r\in\mathcal{P}_r,x\in S}\cost\left(x_r, x\right),
  \;\; \text{ where } S = \bigtimes_{v\in\descendants(r)}\mathcal{P}_v.
\end{equation}
We define the cost function of a tree with root $r$ recursively in the following way:
\begin{equation}\label{eq:permcost}
\cost\left(x_r, x\right) =
  \nodecost_r\left(x_r,(x_c)_{c\in\children(r)}\right) +
  \sum_{c\in\children(r)} \cost(x_c,(x_d)_{d\in\descendants(c)}).
\end{equation}
The functions $\nodecost_r$ can be thought of as a measure
of the cost of the operation represented by a vertex in the AST.
We give a detailed definition of $\nodecost_r$ in \Cref{sec:costfun}.

We split the minimisation in two stages.
In the first stage we minimise over the root variable and its children,
in the second stage we minimise over all other variables, i.e.\ the variables
associated with the vertices in the set $\granddescendants(r) = \bigcup_{c\in\children(r)}\descendants(c)$. 
One can then show that
\begin{multline*}
c^{*} = \min_{x_r,(x_c)_{c\in\children(r)}}
	 \min_{(x_d)_{d\in\granddescendants(r)}}
	 \cost\left(x_r,(x_d)_{d\in\descendants(r)}\right) \\
       = \min_{x_r,(x_c)_{c\in\children(r)}} \left(
	    \nodecost_r\left(x_r,(x_c)_{c\in\children(r)}\right) +
	    \sum_{c\in\children(r)} f_{c}(x_c)\right),
\end{multline*}
where we introduced
\begin{equation}\label{eq:subproblem}
f_{c}(y) = \min_{(x_d)_{d\in\descendants(c)}}\cost\left(y,(x_d)_{d\in\descendants(c)}\right).
\end{equation}
The sub-problem introduced in \eqref{eq:subproblem} can be interpreted as finding
the optimal configuration for a sub-tree, assuming that the index permutation of
the root node is fixed.
In fact, $c^* = \min_{x_r\in\mathcal{P}_r}f_r(x_r)$.
Such problems are said to have an optimal substructure, because the optimal
solution can be constructed out of optimal solutions to sub-problems \cite{Cormen2009}.

Dynamic programming is a commonly used approach for problems with optimal substructure.
We consider a bottom-up dynamic programming algorithm, which works in the following way:
The AST is traversed in post-order.
For each vertex $v$, we solve problem \eqref{eq:subproblem} for every permissible
index permutation $x_v\in\mathcal{P}_v$.
The minimum cost as well as a minimising configuration is
memoized
in a dictionary.
If vertex $v$ is a leaf node, we simply evaluate all index permutations in
$\mathcal{P}_v$.
For internal nodes, we evaluate all configurations in
$\mathcal{P}_v\times(\bigtimes_{w\in \children(v)}\mathcal{P}_w)$.
In order to evaluate the cost, we evaluate the function $w_v$ and
for the sub-problems we look-up the memoized costs.
(The latter are available due to the post-order traversal.)
The run-time of this algorithm can be bound with $\mathcal{O}(N(n!)^{1+c})$,
where $N$ is the number of vertices in an AST, $n$ is the maximum number of indices in a vertex
(and hence $n!$ is the maximum size of each permissible index permutation set),
and $c$ is the maximum number of children.

\subsubsection{Cost function}\label{sec:costfun}
The missing pieces in the last section are the cost functions $\nodecost_r$.
Here, we choose a simple heuristic.
Our basic assumption is that our ASTs will consist mostly of LoGs and that
those will dominate the run-time.
Hence, all other operations are assigned cost zero.
Further assumptions are listed in the following:
\begin{itemize}
 \item Non-unit stride GEMMs are inferior to unit stride GEMMs.
 \item Transposes of $A$ ($B$) in the GEMM $AB$ should be avoided when
       using column-major (row-major) layout.
       Transposes of $B$ ($A$) should be avoided due to missing
       support in code generation back-ends.
 \item Large GEMMs are faster than small GEMMs, i.e.\ one should fuse as many indices as possible.
\end{itemize}
From these assumptions we define the cost of a LoG as the 4-tuple
$(s,l,r,-f)$, where $s$ is the number of required slices with non-unit stride,
$l$ is the number of left-transposes (assuming column-major layout),
$r$ is the number of right-transposes, and $f$ is the number of fused indices.
Cost comparison is based on lexicographic comparison of
$(s,l+r,-f)$, where the lower number of left-transposes
is deciding when two 3-tuples are equal.

The cost function is then
$$
w_r(x_r, (x_c)_{c\in\children(r)}) = \left\{\begin{array}{cl}
					    \text{MinLoG}(x_r, (x_c)_{c\in\children(r)}) & \text{ if type}(x_r) = \text{LoG}, \\
					    (\infty,\infty,\infty,\infty) & \text{ if type}(x_r) \neq \text{LoG } \wedge \text{violated constraints}, \\
                                            (0,0,0,0) & \text{ else,}
                                           \end{array}\right.
$$
where MinLoG enumerates feasible LoG-implementations and returns a LoG-implementation with minimum cost
or infinite cost if no mapping exists.
With ``violated constraints`` we summarise additional constraints
that exist, e.g.\ for Add or Assign nodes.

\subsubsection{Discussion}
The cost function we employ is clearly limited, as it is a heuristic and makes sense
for LoG nodes only.
However, choosing another cost function does not change the dynamic programming scheme,
as long as the cost function is structurally equivalent to \Cref{eq:permcost}.
For example, one could swap our LoG-cost by a run-time model based on micro-benchmarks,
using the methods developed in \cite{Peise2012}.

The algorithm needs to visit every node only once, hence the algorithm is feasible also
for large ASTs.
Problematic are tensors with many dimensions (say large $n$),
as the number of permutations is $n!$ and we need to check all of them.
But as stated in the introduction, we assume that our tensors fit into low-level
caches, which constrains $n$.
E.g.\ $n=6$ is the maximum number of dimensions we may choose such that a $4^n$
tensor (double precision) still fits into a 32\,KiB L1 cache.
In other words, \yateto~is not designed for high-dimensional tensors.

An example of an AST after mapping to LoG can be seen in \Cref{fig:final_ast}.

\subsection{Prefetching}
The library LIBXSMM, a generator for small matrix matrix multiplications,
allows software prefetching instructions to be included in
the generated assembler code \cite{Heinecke2016b}.
This may improve performance, particularly on Intel Knights Landing architecture \cite{Heinecke2016a,Uphoff2017}.
A possible prefetching strategy is to insert \texttt{vprefetch1}
instructions for a matrix B after writes to the result matrix C,
where the memory offset calculated for the matrix C are also used
for the matrix B.
The rationale is that B has the same or a similar shape as C.

In \yateto~the users may add a list of tensors they wish to be
prefetched to a kernel.
The FindPrefetchCapabilities visitor determines the number of bytes that may
be prefetched for every node of the kernel.
For a LoG-node this is equal to the number of bytes in the result tensor.
(Other nodes have no prefetch capability, but this may be added if
appropriate code generators are available.)
AssignPrefetch then greedily assigns each to-be-prefetched tensor P
to one of the nodes, such that the number of bytes of P matches the 
prefetch capability of the node.

\subsection{Control flow graph}
Following \Cref{fig:overview}, we convert the AST to a control flow graph (CFG) representation \cite{Seidl2012}.
The aim here is first to obtain a standardised sequential representation for the later
code generation step, and second to manage temporary buffers or to avoid them at all.

The data structure of the CFG are program points (vertices), which save the current state
of the kernel, and actions (edges), which transform the kernel's state.
An action may be either an assignment or a plus-equals operation.
The left-hand side is always a tensor.
The right-hand side may be either a tensor or one of the operations
modelled in the AST (such as LoG or Product).
The right-hand side may be augmented by multiplication with a scalar.

The individual optimisation steps use standard techniques from compiler optimisation \cite{Seidl2012},
thus we only briefly summarise them in the following:
\begin{enumerate}
 \item MergeScalarMultiplications: The actions $A = f(\dots); B = \alpha A$ are replaced by $B = \alpha f(\dots)$.
 \item LivenessAnalysis: Determine live variables \cite{Seidl2012}.
 \item SubstituteForward: If possible, replace $A$ by \texttt{tmp} after action $A = \mathtt{tmp}$.
 \item SubstituteBackward: If possible, replace left-hand side \texttt{tmp} by $A$ if there follows an action $A = \mathtt{tmp}$.
 \item RemoveEmptyStatements: Remove statements such as $A=A$.
 \item MergeActions: Try to merge actions, e.g.\ merge $A = f(\dots)$ and $B\mathrel{+}=A$ to $B\mathrel{+}=f(\dots)$,
	when there is no intermediate action which depends on $A$.
 \item DetermineLocalInitialization: Greedy map of temporary buffers to temporary variables.
       If buffers are used for multiple variables, set the buffer size to the maximum size
       required by its assigned variables.
\end{enumerate}

In order to illustrate the individual steps, we present matrix multiplication as an example:
\begin{codebox}
  \input{code/cfg_gemm.pyg}
\end{codebox}
\begin{figure}
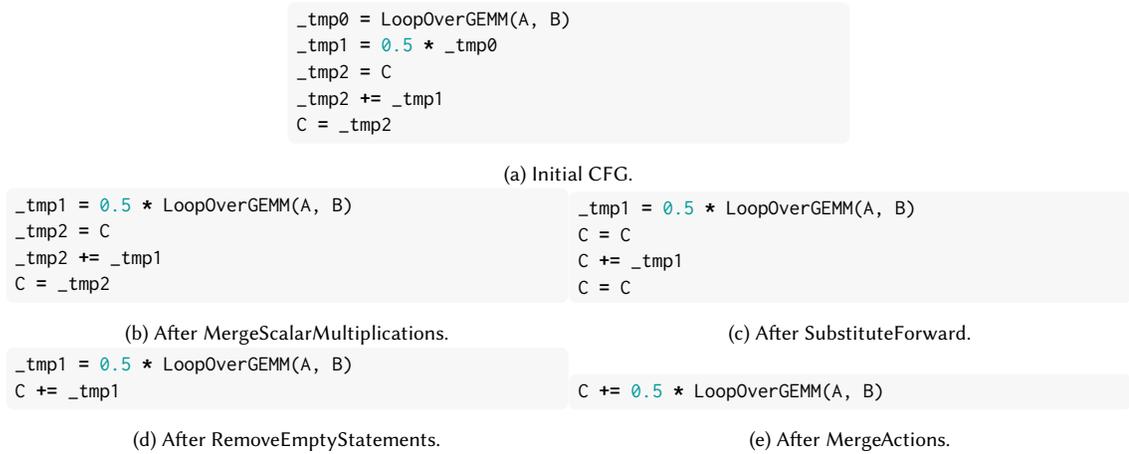

  \begin{subfigure}[b]{.49\linewidth}
    \centering
\begin{codebox}
  \input{code/cfg_gemm_1.pyg}
\end{codebox}
    \caption{Initial CFG.}\label{fig:cfgex_ini}
  \end{subfigure} \\
  \begin{subfigure}[b]{.49\linewidth}
    \centering
\begin{codebox}
  \input{code/cfg_gemm_2.pyg}
\end{codebox}
    \caption{After MergeScalarMultiplications.}
  \end{subfigure}
  \begin{subfigure}[b]{.49\linewidth}
    \centering
\begin{codebox}
  \input{code/cfg_gemm_3.pyg}
\end{codebox}
    \caption{After SubstituteForward.}
  \end{subfigure}
  \begin{subfigure}[b]{.49\linewidth}
    \centering
\begin{codebox}
  \input{code/cfg_gemm_4.pyg}
\end{codebox}
    \caption{After RemoveEmptyStatements.}
  \end{subfigure}
  \begin{subfigure}[b]{.49\linewidth}
    \centering
\begin{codebox}
\input{code/cfg_gemm_5.pyg}
\end{codebox}
    \caption{After MergeActions.}\label{fig:cfgex_final}
  \end{subfigure}
  \caption{Illustration of CFG optimisation of a matrix multiplication example.}
\end{figure}
A simple, but correct, traversal of the corresponding AST yields the sequential
CFG shown in \Cref{fig:cfgex_ini}.
This inefficient CFG is transformed to a single call to GEMM (with $\alpha=0.5$ and $\beta=1$),
as shown in \Cref{fig:cfgex_final}.

\subsection{Code generation}
The final step is to generate machine code.
In principle, the aim is to re-use existing code generators, such as LIBXSMM \cite{Heinecke2016b}
or PSpaMM \cite{Brei2018,Wauligmann2019}, or flavours of BLAS.
Besides, \yateto~generates C++-glue-code, and it is able to generate generic fallback code,
whenever an efficient implementation is not available.
The fallback code consists of nested loops and is architecture independent,
but we expect the performance to be rather poor and compiler-dependent.

Operations are divided into four types: copyscaleadd, indexsum, product, and log.
The first type is a combination of the BLAS routines COPY and AXPY, and may be implemented
in terms of these two functions.
The product and indexsum types correspond to the multiplication formulae and summation formulae
presented in \Cref{sec:hll}.
Having a generic implementation for these two types ensures that we cover a large class
of tensor operations \cite{Lam1997}.
The log type implements Loop-over-GEMMs.

Our concept requires that it should be simple to include additional code generators
depending on the type of operation and on the architecture.
We use the factory method pattern \cite{Gamma1995} to distinguish between code generators.
For each type, the factory method may return a different code generator depending on the
operation's description (e.g.\ dense or sparse matrices) and the architecture.
For the LoG type, \yateto~offers a generic C++-loop-based implementation, which internally
calls another factory method for GEMM.
But our structure would also allow to call specialised LoG generators, such as the ones
developed in \cite{Breuer2017} for contractions involving 3D tensors and matrices.

\subsection{Application interface}
A class is generated for every kernel.
The class has an argument-free execute function and pointers to the input and output tensors of
the kernel are public members, where the name of a member variable is given by the name of the tensor.
So in order to invoke a kernel one has to create a kernel object, set the pointers
to the input and output tensors, and call the execute function.
A kernel class also stores the following information as static members:
The minimal number of flops,\footnote{Strictly speaking, the term ``minimal''
is incorrect due to Strassen's algorithm or other bounds on the number 
of operations. That is, minimal is understood as minimal w.r.t.\ the cost
function defined in \Cref{sec:strengthred}.} also sometimes called non-zero flops,
and the number of flops the implementation requires, also sometimes
called hardware flops.
The number of non-zero flops is computed by means of strength
reduction with the sparse cost function in \Cref{sec:strengthred},
and the hardware flops are returned by the code generators.
These flop counters are useful to evaluate the performance of the kernels.

In addition to the kernel classes, a unit test is generated for every kernel.
The unit test compares the optimised kernel to the naive implementation
of the same kernel.
If the Frobenius norm of the difference of both implementations matches
up to a tolerance, the unit test passes.
(We allow a tolerance due to finite arithmetic imprecision caused
by the difference in order of addition.)

Moreover, a class is generated for every tensor,
where information about the memory layout is stored.
The most useful static member, which is declared
\texttt{constexpr}, is the number of required floating point numbers.
This information may be used to allocate stack or heap memory for a tensor.
For convenience, each tensor class contains a function that returns
a view object for a memory region.
Using a view object $V$ the user may access a d-dimensional tensor
with \texttt{operator()}, i.e.\ \texttt{V(i1,...,id)}.
Tensors whose entries are known at compile time
(e.g.\ stiffness matrices) may be stored within the generated code
as static array, already formatted according to the selected memory layout.

To integrate \yateto~into the build process of an application,
users need to have Python~3 and NumPy installed.
The generated code follows C++11 syntax and depends on a small header-only library,
i.e.\ the generated code itself requires only a recent C++-compiler.
The kernels may also be called from C or Fortran, by manually wrapping the kernel
calls in a C function.
We plan to generate a C and Fortran interface automatically in the future.

\section{Application to an ADER-dG method}\label{sec:applications}
We integrated \yateto~in two codes in order to evaluate practicability of our design,
as well as to evaluate the performance and optimisation opportunities.
Both codes employ a discontinuous Galerkin method with ADER time-stepping
based on the discrete Cauchy-Kovalewski procedure~\cite{Kaeser2007}.
Furthermore, both codes solve a linear PDE in two or three dimensions of the form of \Cref{eq:linearpde}.
The differences lie in the PDE (elastic wave equation, acoustic wave equation) and
the underlying finite elements (tetrahedra, rectangles).

In the following, we do not derive the numerical schemes but only indicate changes
w.r.t.\ literature references, in order to stay within the scope of this paper.
We validated both codes with a plane wave problem (e.g.\ \cite{Kaeser2007}),
and checked that the code converges (for an appropriately chosen time-step)
and that the observed convergence order matches the theoretical order
of the scheme.
Moreover, we tried to keep the amount of indices to a minimum by mostly omitting
dependencies on space and time.
We use the Einstein convention for subscript indices, but not for superscript indices.

\subsection{SeisSol}\label{sec:appseissol}
The earthquake simulation code SeisSol (\url{www.github.com/SeisSol/SeisSol}) solves the seismic wave equation
with support for elastic \cite{Dumbser2006}, viscoelastic \cite{Kaeser2007},
and viscoplastic rheological models \cite{Wollherr2018}.
The code operates on unstructured tetrahedral meshes and the quantities are
approximated with Dubiner's basis functions.
In essence, one may write the computational kernels as matrix chain products
of small matrices, where small means that the maximum dimension size of each
matrix is smaller or equal $\mathcal{B}={N+3 \choose 3}$, where $N$ is the maximum
polynomial degree of the basis functions and $\mathcal{O}=N+1$ is the theoretical
convergence order.
SeisSol already makes use of the code generator tailored for matrix chain products,
which simplifies the implementation of the various rheological models \cite{Uphoff2016}.

For elastic rheological models, the code achieves about 50\,\% of peak performance on compute
clusters based on Intel's Haswell or Knight's Landing architecture \cite{Heinecke2016a,Uphoff2017}.
Moreover, the authors of \cite{Breuer2017} investigate the possibility to
fuse multiple simulations in a single run, using the same ADER-DG method as SeisSol
and an elastic rheological model.
They report a speed-up of $2.1\times$ for a fourth-order scheme, when fusing
8 simulations in one run (compared to 8 individual simulations).
Their main optimisation ingredient is a specialised code generator for tensor contractions involving
3D tensors and sparse matrices.

\subsubsection{ADER-DG}
The numerical scheme of SeisSol is given in detail in \cite{Dumbser2006,Kaeser2007,Uphoff2017},
and is stated in a compact form in \Cref{eq:seissol:derivative,eq:seissol:timeint,eq:seissol:correction_local,eq:seissol:correction_neigh}
for an elastic rheological model.
The equations already include multiple simulations, as in~\cite{Breuer2017}.
Multiple simulations are easily modelled using the Einstein convention as one simply has to add
another index to the degrees of freedom, which we have denoted with a blue s.

The first step in the numerical scheme is to compute the time-derivatives
of the degrees of freedom $Q$, using a Cauchy-Kovalewski procedure:
\begin{equation}
  \mathcal{D}_{\ms kp}^{(\delta+1)}
  := \tilde{K}_{kl}^\xi\mathcal{D}_{\ms lq}^{\delta} A^*_{pq}
        + \tilde{K}_{kl}^{\eta}\mathcal{D}_{\ms lq}^{\delta} B^*_{pq}
        + \tilde{K}_{kl}^{\zeta}\mathcal{D}_{\ms lq}^{\delta} C^*_{pq},
        \text{ where }\,\, \mathcal{D}_{\ms kp}^{0}:=Q_{\ms kp}^{\mu\tau}. \label{eq:seissol:derivative}
\end{equation}
The indices $\mu$ and $\tau$ denote the space-time element
and index $\delta$ denotes the order of the time-derivative.
The matrices $\tilde{K}$ are sparse $\mathcal{B}\times \mathcal{B}$ matrices, and $A^*, B^*,$ and $C^*$ are sparse $9\times 9$ matrices,
which are element-dependent linear combinations of the Jacobians $A,B,$ and $C$ from \Cref{eq:linearpde}.
For every simulation one needs to store a $\mathcal{B}\times 9$ matrix for the degrees of freedom $Q$
and the derivatives $\mathcal{D}$, yielding $\mathcal{S}\times \mathcal{B}\times 9$ tensors, where $\mathcal{S}$ is the number of simulations.

The evolution in time is predicted with a Taylor expansion, without
considering an element's boundaries.
In the DG-scheme, the time-integral over a time-step with size $\Delta t$ is required,
which is computed as
\begin{equation}
\mathcal{I}_{\ms kp}
  :=\; \sum_{\delta=0}^N \dfrac{\Delta t^{\delta+1}}{(\delta+1)!} \mathcal{D}_{\ms kp}^{\delta}. \label{eq:seissol:timeint}
\end{equation}
The time-integral is inserted into the DG-scheme, yielding the update scheme for
the degrees of freedom.
\begin{align}
  Q_{\ms kp}^{\mu(\tau+1)}
    :=&\; Q_{\ms kp}^{\mu\tau} + \hat{K}_{kl}^\xi \mathcal{I}_{\ms lq} A^*_{pq}
     + \hat{K}_{kl}^{\eta} \mathcal{I}_{\ms lq} B^*_{pq}
     + \hat{K}_{kl}^{\zeta} \mathcal{I}_{\ms lq} B^*_{pq}
     + \sum_{\sigma=1}^4 \hat{R}_{km}^{\sigma}\tilde{R}_{lm}^{\sigma} \mathcal{I}_{\ms lq} (A^{+})^{\sigma}_{pq} + \label{eq:seissol:correction_local} \\
     &+ \sum_{\sigma=1}^4 \hat{R}_{km}^{\sigma}f_{mn}^{\text{Rot}(\sigma,\mu)}R_{ln}^{\text{Face}(\sigma,\mu)} \mathcal{I}_{\ms lq}^{\text{Neigh}(\sigma,\mu)} (A^-)^{\sigma}_{pq}.\label{eq:seissol:correction_neigh}
\end{align}
The matrices $\hat{K}$ are sparse $\mathcal{B}\times \mathcal{B}$ matrices,
$\hat{R}, \tilde{R},$ and $R$ are sparse $\mathcal{B}\times \tilde{\mathcal{B}}$ matrices,
where $\tilde{\mathcal{B}} = {N+2\choose 2}$, and $f$ are sparse $\tilde{\mathcal{B}}\times\tilde{\mathcal{B}}$
matrices.
$A^+$ and $A^-$ contain the solution to the 1D Riemann problem which is solved
along the four sides of a tetrahedron, and these are dense $9\times 9$ matrices.

\subsubsection{Equivalent sparsity patterns}
The matrices $\hat{K}$ contain large zero blocks,
especially only the first $\tilde{\mathcal{B}}$ columns are non-zero.
\yateto~automatically determines that the EQSPP of $\mathcal{I}$
is a $\mathcal{S}\times \tilde{\mathcal{B}}\times 9$ subtensor.
Hence, the contraction $\mathcal{I}_{sl[q]}\hat{K}_{kl}^T$ is
mapped to a GEMM of size $(M,N,K) = (\mathcal{S}, \mathcal{B}, \tilde{\mathcal{B}})$
instead of a GEMM of size $(\mathcal{S}, \mathcal{B}, \mathcal{B})$.
Also the contraction $\mathcal{I}_{(sl)q}(A^*)_{pq}^T$
is automatically mapped to a GEMM of size
$(\mathcal{S}\tilde{\mathcal{B}}, 9, 9)$ instead of a GEMM of size
$(\mathcal{S}\mathcal{B}, 9, 9)$.
So the percentage of original flops scales with $3/(N+3)$, e.g.\ 50\,\%
for a fourth-order scheme or 37.5\,\% for a sixth-order scheme.

The sparsity pattern of $\tilde{K}$ is equivalent to the sparsity
pattern of $\hat{K}^T$ and it has a staircase structure.
The effect is, that derivative $D^\delta$ has only
${N-\delta+3 \choose 3}$ non-zero coefficients per quantity \cite{Breuer2014}.
We may set the sparsity patterns of the derivatives $D^\delta$
accordingly in order to exploit the vanishing coefficient -- or
we use existing visitors of \yateto~in order to simultaneously 
define the derivative kernels and determine the sparsity patterns
of the derivatives automatically, as shown in \Cref{fig:automatic_derivative_spp}.

\begin{figure}
\begin{codebox}
 \input{code/derivatives.pyg}
\end{codebox}
\caption{Exemplary code which models vanishing coefficients in
         the Cauchy-Kovalewski procedure from \Cref{eq:seissol:derivative}.
         First a partial AST is built and stored in \texttt{derivativeSum}.
         Then the first two visitors in the transformation process shown
         in \Cref{fig:overview} are used to obtain the sparsity pattern
         of derivative $D^i$.
         Finally, a tensor using the derived sparsity pattern is defined
         and a complete AST is added to the code generator.
        }\label{fig:automatic_derivative_spp}
\end{figure}

\subsubsection{Strength reduction}
By solving the matrix chain multiplication order problem
the number of required flops in \Cref{eq:seissol:correction_neigh} can be reduced from $\mathcal{O}(N^6)$
to $\mathcal{O}(N^5)$ \cite{Uphoff2017}.
The optimal matrix chain order
$\hat{R}_{ka}\left(\left(f_{ab}\left(R_{lb} \mathcal{I}_{lq}\right)\right) A^-_{pq}\right)$
is reproduced by the implementation of the strength reduction algorithm.
Interestingly, for 8--32 fused simulations the order of evaluation
$\left(\hat{R}_{ka}f_{ab}\right)\left(\left(R_{lb} \mathcal{I}_{\ms lq}\right) A^-_{pq}\right)$
is optimal, found automatically by strength reduction.

\subsubsection{Optimal index permutations}
We take \Cref{eq:seissol:correction_neigh} as example,
but note that the same discussion can be applied to the other kernels.
From \yateto~we obtain the following mappings to LoG for either a single 
simulation or for multiple simulations (Greek letters denote temporary tensors):
\begin{equation*}
\begin{aligned}[c]
 &\text{Single} \\
 \alpha_{nq} &:= R_{ln}^T \mathcal{I}_{lq} \\
 \beta_{mq} &:= f_{mn} \alpha_{nq} \\
 \gamma_{mp} &:= \beta_{mq}(A^-)_{pq}^T \\
 Q_{kp} &:= Q_{kp} + \hat{R}_{km}\gamma_{mp}
\end{aligned}
\qquad\qquad\qquad\qquad
\begin{aligned}[c]
 &\text{Multiple} \\
 \alpha_{sn[q]} &:= \mathcal{I}_{sl[q]} R_{ln} \\
 \beta_{(sn)p} &:= \alpha_{(sn)q} (A^-)_{pq}^T \\
 \gamma_{kn} &:= \hat{R}_{km}f_{mn} \\
 Q_{sk[p]} &:= Q_{sk[p]} + \beta_{sn[p]} \gamma_{kn}^T
\end{aligned}
\end{equation*}
We observe that the optimal index permutation algorithm from \Cref{sec:optindperm} fulfils the
goals of the cost function:
Non-unit stride GEMMs are not present, transposes are kept to a minimum and indices
are fused if possible.

In SeisSol, the implementation is in fact a bit different:
The $A^-$ and $R$ matrices (as well as the star matrices and the $A^+$ matrices) are
stored transposed, such that a transpose-free scheme is obtained.
For multiple simulations we can also obtain a transpose-free scheme
in the following manner:
The $A^-$ matrices (as well the star matrices and the $A^+$ matrices)
are stored transposed as for a single simulation,
but for the other matrices we face the opposite situation:
The $R$ matrices are stored in normal order and we have to store
$\hat{R}$ and $f$ transposed as well as $\hat{K}$ and $\tilde{K}$ for the other kernels.

\subsubsection{Findings}
\yateto~is able to reproduce the major optimisations from SeisSol:
Zero-blocks are exploited and the optimal matrix chain multiplication order is found.
For multiple simulations, zero blocks are also exploited automatically,
strength reduction revealed the optimal---and different---evaluation order,
and a transpose-free scheme is found by inspection of mappings to GEMM
produced by \yateto.

\subsection{LinA}
The code LinA was developed for education purposes, and contains a basic
implementation of the ADER-DG method.
LinA solves the linearised equations of acoustics in
two dimensions \cite{Leveque2002}, which involve the three quantities
pressure and particle velocities in horizontal and vertical direction.
A uniform mesh with rectangular elements is used.

Our goal here is to evaluate the application of \yateto~to a
DG-spectral-element-like method (cf.~\cite[Chapter~5.4]{Kopriva2009}
for an introduction to DG-SEM).
As briefly introduced in \Cref{sec:intro_dg},
quantities are represented using a tensor product basis,
i.e.\ $q_p = Q_{lmnp} \psi_l(x) \psi_m(y) \psi_n(z)$.
An advantage of DG-SEM is that integrals over the
reference elements may be split in a tensor product of 1D integrals.
E.g.\ the computation of the gradient in the weak form
can be done with $\mathcal{O}(N^{d+1})$ instead of $\mathcal{O}(N^{2d})$
computations per element, where $d$ is the number of dimensions.
On the downside, one needs to store $(N+1)^d$ degrees of freedom
per quantity instead of~${N+d \choose d}$ per quantity,
which is the minimum amount of degrees of freedom needed
to represent the same space of polynomials.

We implemented the ADER-DG-SEM method in LinA using \yateto~(\url{www.github.com/TUM-I5/LinA}),
and extended the code to three dimensions, which implies
an additional quantity or 4 quantities in total.
For the 1D basis functions $\psi_i$ we use a nodal basis with Gauss-Lobatto
points and Legendre polynomials~\cite{Hesthaven2008}.
In the following we are going to abuse notation and use the
same symbols for tensors as in SeisSol, even though the
tensors are of different shape and have different sparsity patterns.

\subsubsection{Numerical scheme}
The first step in LinA, as in SeisSol, is to predict the element-local
evolution in time:
\begin{align}
\mathcal{D}_{xyzp}^{(\delta+1)}
  :=&\; \tilde{K}_{xl}\mathcal{D}_{lyzq}^{\delta} A^*_{pq}
        + \tilde{K}_{ym}\mathcal{D}_{xmzq}^{\delta} B^*_{pq}
        + \tilde{K}_{zn}\mathcal{D}_{xynq}^{\delta} C^*_{pq},
        \text{ where }\,\, \mathcal{D}_{xyzp}^{0}:=Q_{xyzp}^{\xi\upsilon\zeta\tau}, \label{eq:lina:derivative}\\
\mathcal{I}_{xyzp}
  :=&\; \sum_{\delta=0}^N \dfrac{\Delta t^{\delta+1}}{(\delta+1)!} \mathcal{D}_{xyzp}^{\delta}, \label{eq:lina:integration}
\end{align}
Indices $\xi,\upsilon,\zeta$ denote the location of an element
on the grid and $\tau$ is the time index.
The matrices $\tilde{K}$ are $(N+1)\times (N+1)$ matrices
and $A^*$ and $B^*$ are sparse $4\times 4$ matrices.
Both $Q$ and $\mathcal{D}$ are $(N+1)\times (N+1)\times (N+1)\times 4$ tensors.

The weak derivatives are evaluated afterwards:
\begin{equation}
Q^*_{xyzp}
    := Q_{xyzp}^{\xi\upsilon\zeta\tau} + \hat{K}_{xl} \mathcal{I}_{lyzq} A^*_{pq}
     + \hat{K}_{ym} \mathcal{I}_{xmzq} B^*_{pq}
     + \hat{K}_{zn} \mathcal{I}_{xynq} C^*_{pq}
\end{equation}

In contrast to SeisSol, the neighbour exchange is not based on volume data.
Instead, it is sufficient to consider the boundary nodes.
We copy the boundary nodes to continuous storage for each side, such that we need to
store eight 3D tensors instead of a single 4D tensor.
In \yateto, we represent the ``copy'' operation by the following
matrix-vector products:
\begin{equation}\label{eq:lin3d_evaluateSide}
 \mathcal{X}_{yzp}^{\sigma}
  := F_l^\sigma \mathcal{I}_{lyzp}, \quad
\mathcal{Y}_{xzp}^{\sigma}
  := F_m^\sigma \mathcal{I}_{xmzp}, \quad
\mathcal{Z}_{xyp}^{\sigma}
  := F_n^\sigma \mathcal{I}_{xynp},
\end{equation}
where $F_i^\sigma=\delta_{i0}$ for $\sigma\in\{\text{left},\text{bottom},\text{back}\}$,
and $F^1_i=\delta_{iN}$ for $\sigma\in\{\text{right},\text{top},\text{front}\}$.
The 2D boundary nodes are then used to apply the numerical fluxes:
\begin{multline}\label{eq:lina3d_flux}
Q_{xyzp}^{\xi\upsilon\zeta(\tau+1)} := Q^*_{xyzp} +
     \sum_{\sigma \in \{\text{left},\text{right}\}} \hat{F}_{x}^{\sigma}\left(\mathcal{X}_{yzq}^{\sigma} (A^+)_{pq}^{\sigma}
        + \mathcal{X}_{yzq}^{\text{Neigh}(\sigma,\xi,\upsilon,\zeta)} (A^-)_{pq}^{\sigma}\right) + \\
     + \sum_{\sigma \in \{\text{bottom},\text{top}\}} \hat{F}_{y}^{\sigma}\left(\mathcal{Y}_{xzq}^{\sigma} (A^+)_{pq}^{\sigma}
        + \mathcal{Y}_{xzq}^{\text{Neigh}(\sigma,\xi,\upsilon,\zeta)} (A^-)_{pq}^{\sigma}\right) + \\
     + \sum_{\sigma \in \{\text{back},\text{front}\}} \hat{F}_{z}^{\sigma}\left(\mathcal{Z}_{xyq}^{\sigma} (A^+)_{pq}^{\sigma}
        + \mathcal{Z}_{xyq}^{\text{Neigh}(\sigma,\xi,\upsilon,\zeta)} (A^-)_{pq}^{\sigma}\right)
\end{multline}

\subsubsection{Implementation}
Mapping the numerical scheme to our DSL is simple as one may almost copy
the above formulation in Einstein notation.
E.g.\ the last line in \Cref{eq:lina3d_flux} can be written as following:
\begin{codebox}
 \input{code/lina.pyg}
\end{codebox}
Furthermore, by inspecting the generated implementation we decided
to store the matrices $A^*,B^*,C^*,A^+,A^-$ transposed
and also a transposed copy of $\hat{K}$ and $\tilde{K}$,
which leads to a transpose-free scheme.
The eventual C++-code for the last line in \Cref{eq:lina3d_flux}
can be seen in \Cref{fig:lina_impl}.
There, a generic nested loop-code
implements the outer product $Q_{xyzp} := \text{tmp}_{xyp}\hat{F}_{z} + Q_{xyzp}$.
In principle, one might also map this operation to BLAS, e.g.\ in terms
of the level 2 routine GER, which implements a rank-1 update of a matrix.
That is, one could implement the outer product as a Loop-over-GER,
i.e.\ $Q_{(xy)z[p]} := \text{tmp}_{(xy)[p]}\hat{F}_{z} + Q_{(xy)z[p]}$.
In \yateto~such an implementation corresponds to adding another back-end.
However, we leave the implementation and performance evaluation of Loop-over-GER
for future work.

\begin{figure}
\begin{codebox}
 \input{code/lina_impl.pyg}
\end{codebox}
 \caption{Sample C++-code generated in LinA.
 Indices in tensor contractions are fused such that only a call to GEMM is required.
 The GEMMs (prefix \texttt{pspamm}) are separate functions which contain inline
 assembler code, and are generated specifically for the kernel.
 The nested loops implement an outer product, and were 
 generated using the generic fallback code generator.}\label{fig:lina_impl}
\end{figure}

\section{Implementation aspects and performance results}
In this section we thoroughly evaluate the performance of the
applications presented in \Cref{sec:applications}.
From the results in this section we ultimately want to answer the
following questions:
First, is our tensor toolbox able to reproduce SeisSol's performance
for the special case of matrix chain products?
Second, can we achieve similar performance improvements as in \cite{Breuer2017}
by fusing multiple simulations, but using our generic approach?
Third, do we still get high performance when changing the PDE,
the finite element type, and the basis functions?

\subsection{Hardware and software environment}
We run all our performance tests on two recent Intel architectures:
\begin{itemize}
 \item[\textbf{KNL}] Intel Xeon Phi 7250 on Stampede~2, ``Knights Landing``,
 single socket configuration with 68 cores,
 1.4\,GHz nominal frequency, 1.6\,GHz single-core turbo frequency,
 theoretical peak performance of 3.0\,TFLOPS using double precision
 or 6.1\,TFLOPS using single precision (at 1.4\,GHz),
 1\,MiB private L2 cache per tile (shared between two cores).
 \item[\textbf{SKX}]  Intel Xeon Platinum 8174 on SuperMUC-NG, ``Skylake``,
 dual socket configuration with 24 cores per socket,
 3.1\,GHz nominal frequency,
 2.7\,GHz AVX-512 all-core turbo frequency,\footnote{We could not find
 a processor specification as the 8174 is a special model. The system's peak
 performance is claimed to be 26.3\ PFLOPS on 6336 nodes
 (\url{https://doku.lrz.de/display/PUBLIC/Hardware+of+SuperMUC-NG}),
 which would imply a peak AVX-512 frequency of 2.7\,GHz.
 A test measuring the time of around 54\,billion
 \texttt{vfmadd231pd} instructions (using alternating \texttt{zmm} registers)
 indicates that a peak frequency of 2.79\,GHz is possible.
 In this paper we adopt the vendor's claim of 2.7\,GHz, but 
 we encourage the reader to scale the results with a
 peak frequency that may be more appropriate.}
 theoretical peak performance of 4.1\,TFLOPS using double precision
 or 8.3\,TFLOPS using single precision,
 1\,MiB private L2 cache per core,
 1.375\,MiB shared L3 cache per core (non-inclusive).
\end{itemize}

In each experiment, threads are pinned to cores using the \texttt{KMP\_AFFINITY} or \texttt{OMP\_PLACES}
environment variable. On SKX we use 48 threads and pin each thread to exactly one core.
For simultaneous multithreading (SMT) we use a compact pinning with 2 threads per core.
On KNL we basically employ the same strategy with a maximum of 2 threads per core,
but we leave the first tile free for the operating system.
Thus we use 66 threads without SMT and 132 threads with SMT.
Furthermore, on KNL we use the cache memory mode for SeisSol
and the flat memory mode for LinA, where memory is bound to
MCDRAM using \texttt{numactl}.

Our software is compiled with the Intel C++-compiler (version~18).
We generate our GEMM kernels with LIBXSMM (version~1.10, \cite{Heinecke2016b})
and PSpaMM (commit f59f98d, \cite{Wauligmann2019}),
which both offer highly tuned GEMMs for KNL and SKX.
LIBXSMM is used in static mode (instead of the just-in-time-compilation mode),
i.e.\ code is generated at compile-time.
We remark that the generated file \texttt{subroutine.cpp}, which
contains kernels from LIBXSMM and PSpaMM,
is compiled with option \texttt{-mno-red-zone}.
Otherwise, the compiled code might be wrong due to a combination of
(1) the compiler does not inspect inline assembly,
(2) the generated code might modify the stack (\texttt{pushq}, \texttt{popq}),
(3) Linux calling conventions.

If not stated otherwise, we report the minimum run-time over at least 5 runs.

\subsection{Sparse matrices}
PSpaMM originated from a student research project for dense $\times$ sparse
matrix multiplications \cite{Brei2018}.
The starting point of the project is a clone of the almost optimal
register-blocked LIBXSMM kernels for KNL.
In principle, sparse matrix multiplication is then implemented
by removing every instruction that either is a multiplication with zero
or becomes unnecessary (such as loads of never used data).
Control structures such as the row indices array and the column pointer
array are completely unrolled.

An issue with unrolling control structures is that code is also data,
which needs to be fetched from memory or cache.
An experiment with random sparse matrices in \cite{Brei2018},
using a predecessor of PSpaMM, suggests
that the generated dense $\times$ sparse matrix multiplication
is always faster than its dense $\times$ dense counterpart---as long
as the number of non-zeros stays below a threshold such that code
stays in the L1i cache.
The threshold can be estimated by dividing the L1i cache size
by the size of a fused multiply-add instruction,
as a fused multiply-add instruction is generated for every non-zero
in the sparse matrix.

The sparse $\times$ dense case is more involved \cite{Breuer2013},
the fundamental issue being that the standard outer product formulation for
matrix multiplications forces to vectorise over sparse vectors.
In SeisSol, sparse $\times$ dense kernels did not improve time-to-solution
on KNL \cite{Heinecke2016a}.

As a consequence, we employ the following rule:
We use a sparse memory layout when a sparse matrix is multiplied from the
right and we use a dense memory layout when a sparse matrix is multiplied
from the left.

\subsection{SeisSol: Performance}\label{sec:seissol_reproducer}
\begin{figure}
  \hspace*{-.5cm}
  \includegraphics{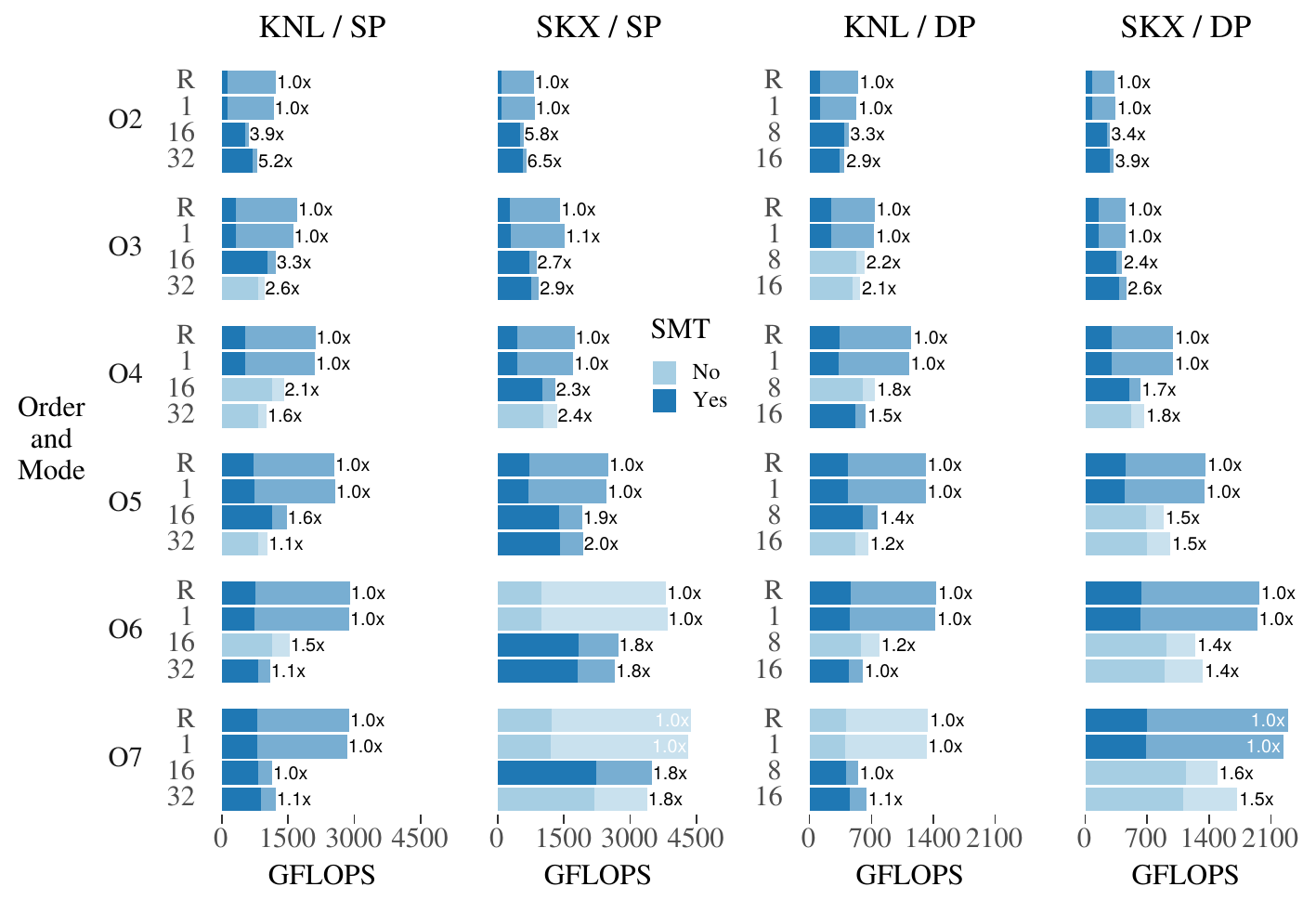}
  \caption{The figure shows performance experiments with SeisSol's kernels.
  We compare architectures (KNL, SKX), precision (SP=single, DP=double), orders (O2--O7),
  and the number of simultaneous simulations (1--32).
  The letter R (reference) denotes the latest SeisSol release \cite{Uphoff2017}.
  Dark bars show non-zero flops and light bars denote hardware flops
  and the shade of blue indicates if SMT increases performance.
  Numbers right to bars show speed-ups, if one compares S simultaneous
  simulations to S individual simulations.
  The benchmark results are tabulated in \Cref{apx:seissol}.
  }\label{fig:perf:seissol}
\end{figure}
In this section, we present and discuss the performance results for SeisSol
shown in \Cref{fig:perf:seissol}.
We note that SeisSol's performance reproducer is used~\cite{Uphoff2017}.
The latter calls the same kernels as SeisSol does, but
the reproducer operates on random initial data and the
cell neighbouring relations are random.

We investigate single precision arithmetic as well as double precision arithmetic.
(As \yateto~supports single and double precision, we take an agnostic viewpoint
on necessary precision and simply generate and test both precisions.)
Furthermore, we ran the performance reproducer of the latest SeisSol
release for reference \cite{Uphoff2017}.
Performance is measured in terms of non-zero flops and hardware flops.
The important measure here is non-zero flops, as it is independent
of the choice of sparse and dense memory layouts and as such
represents time-to-solution.
However, the non-zero flops measure assumes perfect usage of sparsity
patterns and we see it as too optimistic w.r.t.\ hardware.
Hence we include hardware flops as measure of exploitation of the hardware's
capabilities.

In \Cref{fig:perf:seissol} we observe that the results for single
simulations (row ``1'') closely match the performance of the reference (row ``R'')
over all orders, architectures, and precisions.
We conclude that the special case of matrix chain multiplications
is well handled within \yateto.
We also see that there is a large gap between non-zero flops
and hardware flops.
The latter is due to handling sparse matrices as dense, which minimises
time-to-solution according to an auto-tuning procedure \cite{Heinecke2016a}.
For multiple simulations (rows ``8'', ``16'', and ``32''),
most sparse matrices are multiplied from the
right such that the favourable dense $\times$ sparse case is obtained.
Moreover, the arithmetic intensity is higher as matrices already
in cache are reused~\cite{Breuer2017}.
In our experiments, we observe that multiple simulations (compared to the same
amount of individual simulations) yield a speed-up of
1.1$\times$--3.9$\times$ for DP and 1.1$\times$--6.5$\times$ for SP.
Furthermore, we observe an increase in non-zero peak performance from
1.1\,\%--16.9\,\% to 6.7\,\%--27.5\,\%.
Specifically, a non-zero peak efficiency of 19.6\,\% for KNL O4 DP
is obtained, which is within a few percent of the non-zero peak efficiency
obtained in \cite{Breuer2017}.\footnote{We note that the schemes of
\cite{Uphoff2017} and \cite{Breuer2017} are not identical
due to a different handling of the boundary integral terms,
however the remainder of the scheme is identical.}

\subsection{SeisSol: Validation}
\begin{figure}
  \includegraphics{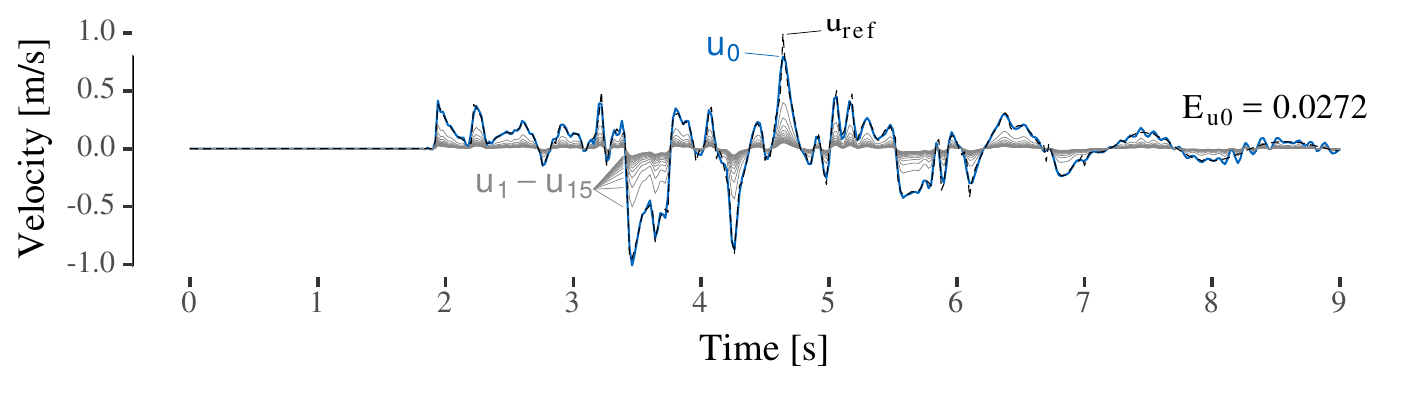}
  \caption{Comparison of simulation and reference solution.
  Shown is the particle velocity in x-direction of the 9-th receiver of the LOH.1 benchmark
  (see \url{http://www.sismowine.org/model/WP2_LOH1.pdf}).
  A total of 16 simulations ran simultaneously in single precision,
  where the 0-th simulation has the same seismic moment as the reference.
  Other simulations ($u_1$--$u_{15}$) have a reduced moment (see text).
  The relative seismogram misfit for the 0-th simulation is shown in $E_{u0}$.
  }\label{fig:valid:loh1}
\end{figure}

\begin{table}
\centering
\caption{Summary of performance results for the LOH.1 benchmark using 8 SKX nodes and O6.}\label{tab:loh1}
\begin{tabular}{llrrrrr}
  \toprule
Precision & LTS & Mode & NZ-GFLOPS & HW-GFLOPS & \% $\text{HW}_\text{proxy}$ & Time / Sim. [min] \\ 
  \midrule
DP & No & 1 & 524 & 1618 & 83 & 42.1 \\ 
  DP & No & 8 & 839 & 1148 & 92 & 26.6 \\ 
  DP & Yes & 1 & 480 & 1485 & 76 & 17.8 \\ 
  DP & Yes & 8 & 752 & 1030 & 83 & 11.5 \\ 
  SP & No & 1 & 840 & 3217 & 84 & 26.2 \\ 
  SP & No & 16 & 1719 & 2559 & 94 & 13.0 \\ 
  SP & Yes & 1 & 789 & 3023 & 79 & 10.8 \\ 
  SP & Yes & 16 & 1433 & 2134 & 78 & 6.0 \\ 
   \bottomrule
\end{tabular}
\end{table}

In \Cref{sec:seissol_reproducer} a performance reproducer is used to evaluate
SeisSol's performance.
Here, we evaluate the correctness of the implementation using the 
well-established Layer Over Halfspace 1 (LOH.1) benchmark \cite{Day2003}.
The model description as well as the reference solution is taken
from the SeISmic MOdeling Web INterfacE (\url{www.sismowine.org}).

We ran a total of 8 or 16 simulations simultaneously on a mesh
with 1.1\,million elements and Order 6.
The 0-th simulation has a seismic moment of $M_0=10^{18}\,\text{Nm}$,
as required by the benchmark.
Other simulations have a seimic moment of $M_0/(1+s)$ where $s$ is the simulation
number, i.e.\ $M_0/2$, $M_0/3$, $M_0/4$, etc.\
Due to the PDEs being linear, we expect that the amplitude of particle
velocities is reduced by the same factor.
\Cref{fig:valid:loh1} shows that the 0-th simulation matches the reference well,
with a low relative seismogram misfit.
The other simulations have their amplitude reduced, as expected.
We note that the relative seismogram misfit is identical up to 3 digits,
independent of the number of simulations or floating point precision.

The measured performance on 8 SKX nodes is shown in \Cref{tab:loh1}, where we 
evaluated single simulations and multiple simulations, SP and DP,
as well as Global Time Stepping (GTS) and Local Time Stepping (LTS).
SMT is enabled or disabled according to \Cref{fig:perf:seissol},
but only 47 cores for computation are used as one core is left
for a communication thread.
We observe that we obtain a lower performance in comparison to the
performance reproducer (see column \% $\text{HW}_\text{proxy}$).
We do not attempt a detailed comparison of the performance reproducer
with the real application, however, it is interesting to note that
for multiple simulations we ``lose'' less performance than for a single
simulation (e.g.\ 84\,\% vs.\ 94\,\% for GTS-SP).
The latter is a hint that the remaining serial part of SeisSol might be responsible
for the lower performance, as the serial part is mostly independent of the number
of simulations but the parallel part becomes more expensive for more simulations.
LTS is less efficient due to smaller loop lengths, but pays off in
time-to-solution.
In total, the expected speed-up of multiple simulations is reproduced
(1.6$\times$ for GTS-DP, 2.0$\times$ for GTS-SP).

\subsection{LinA}
\begin{figure}
 \includegraphics[width=\textwidth]{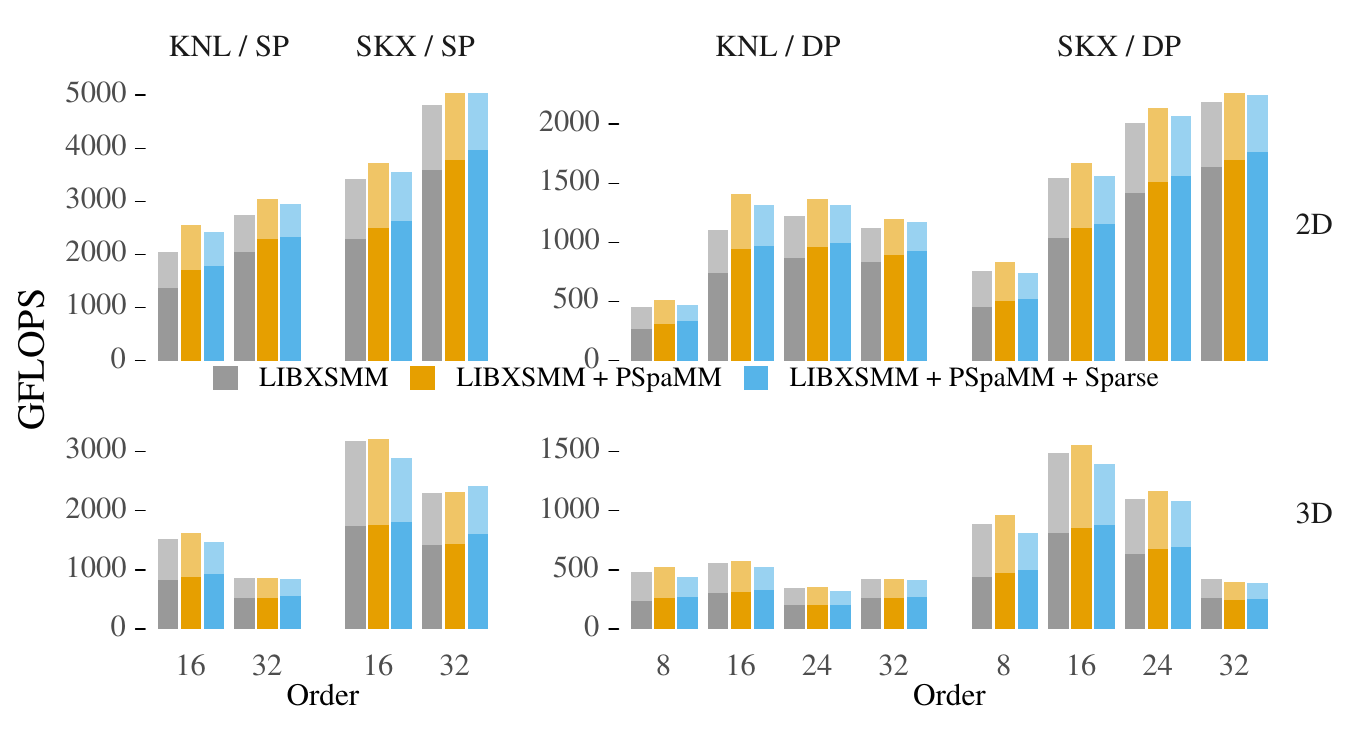}
 \caption{We show performance results of a DG-SEM method implemented in the
 code LinA. We compare architectures (KNL, SKX), precision (SP=single, DP=double),
 orders (8, 16, 24, 32), and number of spatial dimensions (2D, 3D).
 Dark bars show non-zero flops and light bars denote hardware flops.
 The colour indicates the employed code generators as well
 as the usage of the CSC memory layout for some matrices.}\label{fig:perf:lina}
\end{figure}
While LinA uses the same ADER-DG scheme as SeisSol, the implementation
is very different due to the use of tensor basis functions
and cuboid elements.
The main difference is that the arithmetic intensity only grows
linear with polynomial degree, instead of cubic growth as in SeisSol.
But due to the use of a nodal basis, most of the matrices in the
scheme are dense or almost dense and sparsity mainly comes in
due to the coefficient matrices of the PDE.
Another issue is the increased storage required by the degrees of
freedom, such that the last level cache size is becoming
an issue for three spatial dimensions, hence we also tested
the scheme in two spatial dimensions to better understand the tradeoff
between arithmetic intensity and cache usage.

A general observation in the performance results shown in
\Cref{fig:perf:lina} is that we require at least order 16
in order to surpass the 1\,TFLOPS mark in double precision
(or equivalently 2\,TFLOPS in single precision), which is in SeisSol already
passed for order 5.
This mark is surpassed in 2D for all orders except 8, but in
3D only on SKX for orders 16 and 24 in double precision and
orders 16 and 32 in single precision.
The reason for the observed behaviour becomes clear when considering
cache usage in the ADER kernel (\Cref{eq:lina:derivative,eq:lina:integration}),
which contains the major part of the computational load per element:
Inside ADER we require storage for four tensors of size $\mathcal{O}^d(1+d)$,
where $\mathcal{O}=N+1$ is the order of the scheme and $d$ is the number of
spatial dimensions.
We need one buffer for the current and the next derivative,
one buffer contains the time integral, and another temporary buffer is used
for intermediate products.
In 2D, a maximum storage size of 96\,KiB is required
for all four tensors, which easily fits into L2 caches.
In 3D, we require 512\,KiB, 1.7\,MiB, and 4\,MiB for orders
16,24, and 32, and half the amount for single precision respectively.
So on KNL we need the full L2 cache per core for the four tensors when using
double precision, hence necessary cache lines are evicted from L2 cache
during ADER and we get a low arithmetic intensity.
As we only need 256\,KiB for order 16 in single precision,
the latter discussion also explains why O16 SP is more than 2$\times$
faster than O16 DP.
On SKX we have a total of 2.1375\,MiB cache per core, as the L3 cache
is non-inclusive.
Thus, only for O32 DP tensors are evicted from cache.

Moreover, we obtain an additional performance boost due to the choice of
code generator and memory layout.
For the evaluation of the flux function (i.e.\ multiplication with
$A^*, B^*,$ and $C^*$) we have to contract the degrees of freedom
with matrices smaller than $(d+1)\times (d+1)$.
We performed micro-benchmarks on KNL which suggest that the LIBXSMM
kernels are sub-optimal for such matrix sizes, which we believe to be
due to superfluous \texttt{vaddpd} instructions and due to the fixed
$8\times N$ register blocking on KNL.
As a consequence, we added PSpaMM to replace the sub-optimal kernels,
which is implemented in \yateto~by specifying a list of code generators,
where we may give higher preference to a code generator for specific
matrix sizes.
Using PSpaMM and LIBXSMM combined gives us a non-zero performance boost up
to 202\,GFLOPS in DP or up to 344\,GFLOPS in SP (for KNL O16 in 2D).
Using a CSC memory layout for the matrices
$A^*, B^*,$ and $C^*$ gives a further performance boost up
to 74\,GFLOPS in DP or up to 194\,GFLOPS in SP (for SKX O32 in 2D).
In total we obtain a performance increase of up to 31\,\% for DP
and 30\,\% for SP, and a maximum non-zero peak performance
of 1769\,GFLOPS for DP and 3969\,GFLOPS for SP.

\subsection{Findings}
In this section we found that kernels generated with \yateto~are able
to match the performance of \cite{Uphoff2017} and all major optimisations
of SeisSol are found automatically.
The extension to multiple simulations closely matches the performance of
\cite{Breuer2017}.

For LinA we saw that \yateto~does not save the developer from basic
arithmetic intensity considerations, but high performance may be achieved
with automatically generated kernels in the case of a large enough arithmetic
intensity.
Moreover, \yateto's~ability to mix and match different code generators increased
performance by up to 31\,\%.

\section{Another area of application}\label{sec:stock}
Small tensor contractions are found to be the core of
MADNESS \cite{Stock2011}, which is a framework for solving problems
in quantum chemistry and other applications \cite{Harrison2016}.
Kernels in MADNESS often take the following form \cite{Stock2011}:
\begin{equation}
  R_{ijk} = S_{xyz} X^L_{xl} X^R_{li} Y^L_{ym} Y^R_{mj} Z^L_{zn} Z^R_{nk}
  \label{eq:stock_mra}
\end{equation}
Each L-R matrix pair is obtained by a low-rank decomposition of a $p\times p$ matrix.
In the following we assume that the low-rank is equal among all L-R pairs
and given by $q$, such that indices $l,m,n$ have size $q$ and indices
$i,j,k,x,y,z$ have size $p$.
The authors of \cite{Stock2011} developed a code generator for such
tensor contractions, which uses loop transformations and outputs
loop-code and SSE3 intrinsics.
In the following we investigate the performance of the above kernel
when implemented with \yateto.

The above expression is a natural fit for \yateto's DSL.
We note that the optimal order of evaluation, as stated in
\cite{Stock2011}, is found automatically.
A few tweaks may be applied to the above formulation:
First of all, the matrices $X^L$ and $X^R$
are stored already transposed, as then a transpose-free kernel
is obtained.
(The authors of \cite{Stock2011} also store $Z^L$ transposed.)
Second, the rows of $X^L$ and $X^R$ may be padded with zeros,
such that the number of rows are a multiple of the vector width
(i.e.\ 8 in double precision and 16 in single precision.)

\begin{figure}
 \includegraphics[width=\textwidth]{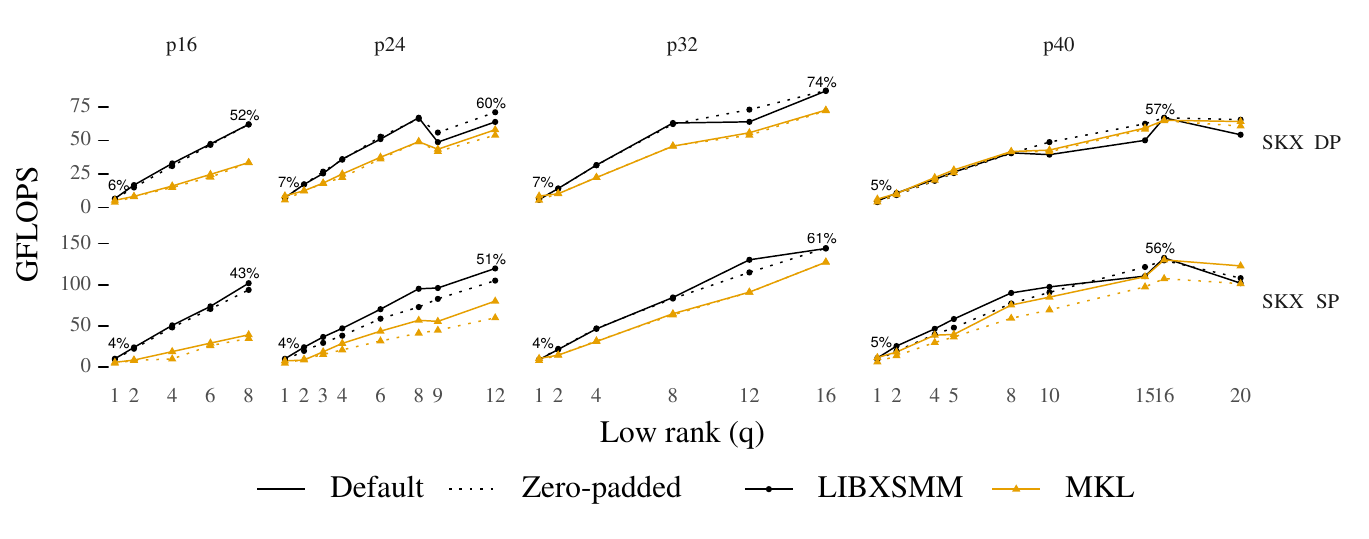}
 \caption{Performance of the kernel from \Cref{eq:stock_mra}
 for multiple tensor sizes $p$ and $q$ (see text).
 The plot shows non-zero flops, i.e.\ artificial
 flops due to zero-padding are not counted.
 The single-core AVX-512 peak performance on SKX
 is calculated w.r.t.\ the empirically determined peak frequency
 of 3.7\,GHz.}\label{fig:stock_mra}
\end{figure}

\Cref{fig:stock_mra} shows our performance results.
We observe that in the extreme case of $q=1$ we obtain
low performance of 6.4\,GFLOPS--8.8\,GFLOPS in DP
and 10.2\,GFLOPS--11.6\,GFLOPS in SP.
A similar performance was achieved in \cite{Stock2011}
for $q=1$, even though they used a CPU from the year 2008.
A potential issue is that the generated GEMMs degenerate
to either inner products, outer products, or scalar multiplications.
Moving to larger values of $q$, we see a steady increase in performance,
up to 87.3\,GFLOPS in DP and 144.6\,GFLOPS in SP, which corresponds
to 74\,\% and 61\,\% of the theoretical peak performance w.r.t.\ the
empirically determined AVX-512 peak frequency of 3.7\,GHz, respectively.

Comparing LIBXSMM to MKL we see that using LIBXSMM is beneficial in most
cases, especially for small tensor sizes.
In many cases zero-padding is either beneficial
or delivers almost the same performance as its counterpart.
But for cases p24 SP and p40 SP zero-padding mostly lowers performance.
The reason for this behaviour stems likely from the multiplication with $X^R$:
Here, the introduction of zeros leads to a non-fused GEMM,
which is fused in the non-padded variant.

In summary, we find that mapping operations to GEMM might be
sub-optimal for very small tensor contractions.
However, given large enough tensor contractions we obtain high performance
with low implementation effort, as the DSL naturally suits the problem description
and the evaluation order with minimum cost is found automatically.
Furthermore, the choice of memory layout (padded, non-padded) and GEMM back-end
(LIBXSMM, MKL) might have a large impact on performance.

The example presented here is included in \yateto's main repository
in the $\texttt{examples}$ folder.

\section{Conclusion and outlook}
We presented \yateto, a tensor toolbox especially suited
for small tensor operations.
The toolbox includes standard techniques, such as strength reduction
or the mapping of tensor operations to Loop-over-GEMM.
To our best knowledge, we present novel optimisations
for tensor operations such as equivalent sparsity patterns or
optimal index permutations.

We included \yateto~in SeisSol, which allows to generate kernels
for single forward simulations but also multiple forward simulations.
From our point of view, the DSL simplifies the simultaneous implementation
of single and multiple simulations considerably,
but still allows to tune both implementation variants individually.
Moreover, we integrated \yateto~in the tensor product based code LinA
with low implementation effort.

The performance of our new SeisSol version matches its previous performance
for a single simulation.
For multiple simulations, we obtain speed-ups of $1.1\times$ to $6.5\times$
and a non-zero peak efficiency of up to 27.5\,\% on Skylake.
In LinA, we saw that ability to mix-and-match GEMM kernels
allows to increase performance by up to 31\,\%, leading
to a non-zero peak efficiency of up to 48\,\%.
Application of \yateto~to a literature example showed high performance
over 50\,\% peak for large enough tensors, but likely sub-optimal performance
for very small tensors.

A current limitation of our approach is that the employed code-generators
for small matrix multiplications do not support transpositions.
In our test cases we are able to avoid transpositions, but this
might not be the case for other applications.
However, one may always use a BLAS implementation like MKL, which support
transpositions, and it is a simple matter to include further code generators
or libraries that offer GEMM.

In future work we want to evaluate \yateto~for more applications
and further back-ends.
Especially, we are interested in vectorisation over elements
(also sometimes called patches) for hexahedral meshes,
or in curvilinear elements.
Furthermore, we think that a useful extension to \yateto~are tools that
assist the user in the development process.
For example, a tool could automatically generate roof-line models
or could estimate the cache usage of a kernel.

\begin{acks}

The work in this paper was supported by the Volkswagen Foundation
(ASCETE -- Advanced Simulation of Coupled Earthquake-Tsunami Events, grant no.\ 88479).
Computing resources were provided by the Intel Parallel Computing Center ExScaMIC-KNL
and by the Leibniz Supercomputing Centre (project pr45fi).

\end{acks}

\bibliographystyle{ACM-Reference-Format}
\bibliography{yateto}

\newpage
\appendix
\section{SeisSol: Full performance results}\label{apx:seissol}
\begin{table}[ht]
\caption{Full results of the SeisSol benchmark,
 where NZ:=non-zero GFLOPS and HW:=hardware GFLOPS.
 $\text{HW}_\text{mad}$ shows the standard deviation
 estimated with the consistent median absolute deviation.}\label{tab:apx:seissol}

\begin{subtable}[t]{.49\linewidth}
\centering

\scalebox{0.75}{
\begin{tabular}{lllrrrr}
  \toprule
Ord. & Sim. & SMT & MDoF/s & $\text{NZ}_\text{max}$ & $\text{HW}_\text{max}$ & $\text{HW}_\text{mad}$ \\ 
  \midrule
O2 & R & Yes & 770 & 134 & 1229 & 8.4 \\ 
  O2 & 1 & Yes & 739 & 129 & 1172 & 36.8 \\ 
  O2 & 16 & Yes & 3037 & 525 & 611 & 8.5 \\ 
  O2 & 32 & Yes & 4034 & 697 & 808 & 12.1 \\ 
  O3 & R & Yes & 1502 & 324 & 1709 & 5.4 \\ 
  O3 & 1 & Yes & 1427 & 307 & 1615 & 22.6 \\ 
  O3 & 16 & Yes & 4882 & 1038 & 1224 & 12.3 \\ 
  O3 & 32 & No & 3857 & 820 & 962 & 10.4 \\ 
  O4 & R & Yes & 1902 & 536 & 2122 & 56.1 \\ 
  O4 & 1 & Yes & 1897 & 534 & 2108 & 83.2 \\ 
  O4 & 16 & No & 4061 & 1143 & 1403 & 33.5 \\ 
  O4 & 32 & No & 2961 & 833 & 1017 & 22.2 \\ 
  O5 & R & Yes & 1930 & 723 & 2549 & 55.8 \\ 
  O5 & 1 & Yes & 1949 & 730 & 2568 & 55.6 \\ 
  O5 & 16 & Yes & 3030 & 1145 & 1469 & 33.8 \\ 
  O5 & 32 & No & 2163 & 817 & 1043 & 89.7 \\ 
  O6 & R & Yes & 1494 & 752 & 2902 & 5.6 \\ 
  O6 & 1 & Yes & 1489 & 749 & 2889 & 1.4 \\ 
  O6 & 16 & No & 2221 & 1135 & 1534 & 66.7 \\ 
  O6 & 32 & Yes & 1603 & 818 & 1095 & 13.8 \\ 
  O7 & R & Yes & 1204 & 806 & 2888 & 10.4 \\ 
  O7 & 1 & Yes & 1184 & 792 & 2839 & 11.2 \\ 
  O7 & 16 & Yes & 1191 & 814 & 1142 & 10.0 \\ 
  O7 & 32 & Yes & 1285 & 877 & 1220 & 17.6 \\ 
   \bottomrule
\end{tabular}
}

\caption{KNL / SP}
\end{subtable}
\begin{subtable}[t]{.49\linewidth}
\centering

\scalebox{0.75}{
\begin{tabular}{lllrrrr}
  \toprule
Ord. & Sim. & SMT & MDoF/s & $\text{NZ}_\text{max}$ & $\text{HW}_\text{max}$ & $\text{HW}_\text{mad}$ \\ 
  \midrule
O2 & R & Yes & 687 & 120 & 548 & 12.6 \\ 
  O2 & 1 & Yes & 670 & 117 & 531 & 11.0 \\ 
  O2 & 8 & Yes & 2262 & 392 & 443 & 15.3 \\ 
  O2 & 16 & Yes & 1981 & 342 & 397 & 6.8 \\ 
  O3 & R & Yes & 1117 & 241 & 737 & 7.5 \\ 
  O3 & 1 & Yes & 1115 & 240 & 732 & 13.7 \\ 
  O3 & 8 & No & 2497 & 532 & 625 & 7.5 \\ 
  O3 & 16 & No & 2295 & 488 & 573 & 29.5 \\ 
  O4 & R & Yes & 1211 & 341 & 1154 & 14.1 \\ 
  O4 & 1 & Yes & 1184 & 334 & 1126 & 5.7 \\ 
  O4 & 8 & No & 2122 & 598 & 742 & 1.9 \\ 
  O4 & 16 & Yes & 1842 & 518 & 637 & 18.6 \\ 
  O5 & R & Yes & 1169 & 438 & 1314 & 21.4 \\ 
  O5 & 1 & Yes & 1171 & 438 & 1314 & 14.5 \\ 
  O5 & 8 & Yes & 1579 & 598 & 775 & 16.4 \\ 
  O5 & 16 & No & 1370 & 518 & 664 & 16.8 \\ 
  O6 & R & Yes & 917 & 461 & 1435 & 6.9 \\ 
  O6 & 1 & Yes & 912 & 459 & 1426 & 9.0 \\ 
  O6 & 8 & No & 1134 & 581 & 792 & 13.2 \\ 
  O6 & 16 & Yes & 876 & 447 & 601 & 20.9 \\ 
  O7 & R & No & 614 & 411 & 1343 & 79.9 \\ 
  O7 & 1 & No & 605 & 405 & 1323 & 112.0 \\ 
  O7 & 8 & Yes & 623 & 417 & 546 & 10.4 \\ 
  O7 & 16 & Yes & 673 & 459 & 645 & 11.5 \\ 
   \bottomrule
\end{tabular}
}

\caption{KNL / DP}
\end{subtable}

\begin{subtable}[t]{.49\linewidth}
\centering

\scalebox{0.75}{
\begin{tabular}{lllrrrr}
  \toprule
Ord. & Sim. & SMT & MDoF/s & $\text{NZ}_\text{max}$ & $\text{HW}_\text{max}$ & $\text{HW}_\text{mad}$ \\ 
  \midrule
O2 & R & Yes & 510 & 89 & 815 & 3.8 \\ 
  O2 & 1 & Yes & 530 & 92 & 840 & 0.3 \\ 
  O2 & 16 & Yes & 2939 & 508 & 584 & 0.9 \\ 
  O2 & 32 & Yes & 3300 & 570 & 653 & 3.9 \\ 
  O3 & R & Yes & 1249 & 269 & 1421 & 3.5 \\ 
  O3 & 1 & Yes & 1344 & 290 & 1521 & 10.5 \\ 
  O3 & 16 & Yes & 3377 & 718 & 877 & 3.2 \\ 
  O3 & 32 & Yes & 3619 & 769 & 935 & 1.6 \\ 
  O4 & R & Yes & 1562 & 440 & 1741 & 6.0 \\ 
  O4 & 1 & Yes & 1542 & 434 & 1714 & 26.8 \\ 
  O4 & 16 & Yes & 3565 & 1003 & 1306 & 2.2 \\ 
  O4 & 32 & No & 3687 & 1037 & 1343 & 0.5 \\ 
  O5 & R & Yes & 1901 & 712 & 2511 & 21.0 \\ 
  O5 & 1 & Yes & 1877 & 703 & 2474 & 12.4 \\ 
  O5 & 16 & Yes & 3655 & 1382 & 1918 & 1.5 \\ 
  O5 & 32 & Yes & 3712 & 1402 & 1937 & 20.2 \\ 
  O6 & R & No & 1965 & 988 & 3816 & 21.7 \\ 
  O6 & 1 & No & 1982 & 997 & 3845 & 49.8 \\ 
  O6 & 16 & Yes & 3600 & 1840 & 2729 & 44.5 \\ 
  O6 & 32 & Yes & 3544 & 1809 & 2660 & 58.4 \\ 
  O7 & R & No & 1827 & 1222 & 4383 & 36.0 \\ 
  O7 & 1 & No & 1802 & 1206 & 4321 & 42.8 \\ 
  O7 & 16 & Yes & 3275 & 2237 & 3493 & 68.0 \\ 
  O7 & 32 & No & 3206 & 2187 & 3389 & 8.5 \\ 
   \bottomrule
\end{tabular}
}

\caption{SKX / SP}
\end{subtable}
\begin{subtable}[t]{.49\linewidth}
\centering

\scalebox{0.75}{
\begin{tabular}{lllrrrr}
  \toprule
Ord. & Sim. & SMT & MDoF/s & $\text{NZ}_\text{max}$ & $\text{HW}_\text{max}$ & $\text{HW}_\text{mad}$ \\ 
  \midrule
O2 & R & Yes & 414 & 72 & 330 & 0.7 \\ 
  O2 & 1 & Yes & 427 & 74 & 339 & 0.2 \\ 
  O2 & 8 & Yes & 1402 & 243 & 274 & 0.2 \\ 
  O2 & 16 & Yes & 1609 & 278 & 318 & 0.9 \\ 
  O3 & R & Yes & 696 & 150 & 459 & 2.1 \\ 
  O3 & 1 & Yes & 696 & 150 & 457 & 1.2 \\ 
  O3 & 8 & Yes & 1659 & 353 & 415 & 0.6 \\ 
  O3 & 16 & Yes & 1788 & 380 & 462 & 0.6 \\ 
  O4 & R & Yes & 1043 & 294 & 994 & 10.3 \\ 
  O4 & 1 & Yes & 1043 & 294 & 992 & 3.9 \\ 
  O4 & 8 & Yes & 1770 & 499 & 619 & 0.8 \\ 
  O4 & 16 & No & 1827 & 514 & 669 & 0.2 \\ 
  O5 & R & Yes & 1212 & 454 & 1363 & 5.9 \\ 
  O5 & 1 & Yes & 1199 & 449 & 1347 & 17.3 \\ 
  O5 & 8 & No & 1807 & 684 & 887 & 2.9 \\ 
  O5 & 16 & No & 1835 & 694 & 963 & 0.8 \\ 
  O6 & R & Yes & 1259 & 633 & 1970 & 28.6 \\ 
  O6 & 1 & Yes & 1244 & 626 & 1946 & 42.5 \\ 
  O6 & 8 & No & 1784 & 914 & 1246 & 3.1 \\ 
  O6 & 16 & No & 1766 & 902 & 1332 & 4.0 \\ 
  O7 & R & Yes & 1048 & 702 & 2293 & 101.3 \\ 
  O7 & 1 & Yes & 1024 & 686 & 2241 & 35.7 \\ 
  O7 & 8 & No & 1707 & 1142 & 1495 & 27.0 \\ 
  O7 & 16 & No & 1614 & 1103 & 1722 & 23.9 \\ 
   \bottomrule
\end{tabular}
}

\caption{SKX / DP}
\end{subtable}

\end{table}

\end{document}